\newcommand{\be}{\begin{equation}}
\newcommand{\ee}{\end{equation}}
\newcommand{\ba}{\begin{array}}
\newcommand{\ea}{\end{array}}
\newcommand{\bea}{\begin{eqnarray}}
\newcommand{\eea}{\end{eqnarray}}
\newcommand{\ket}[1]{| #1 \rangle}
\newcommand{\bra}[1]{\langle #1 |}
\newtheorem{lemma}{Lemma}
\newtheorem{theorem}{Theorem}
\begin{document}


\title{Universal Adiabatic Quantum Computation via the Space-Time Circuit-to-Hamiltonian Construction}

\author{David \surname{Gosset}}
\affiliation{Institute for Quantum Computing and Dept. of Combinatorics and Optimization, University of Waterloo}
\author{Barbara M. \surname{Terhal}}
\author{Anna \surname{Vershynina}}
\affiliation{JARA Institute for Quantum Information, RWTH Aachen University}

\date{\today}

\begin{abstract}
We show how to perform universal adiabatic quantum computation using a Hamiltonian which describes a set of particles with local interactions on a two-dimensional grid. A single parameter in the Hamiltonian is adiabatically changed as a function of time to simulate the quantum circuit. We bound the eigenvalue gap above the unique groundstate by mapping our model onto the ferromagnetic XXZ chain with kink boundary conditions; the gap of this spin chain was computed exactly by Koma and Nachtergaele using its $q$-deformed version of SU(2) symmetry. We also discuss a related time-independent Hamiltonian which was shown by Janzing to be capable of universal computation. We observe that in the limit of large system size, the time evolution is equivalent to the exactly solvable quantum walk on Young's lattice.

\end{abstract}

\pacs{03.67.-a, 03.65.Vf, 03.67.Lx}

\maketitle


Adiabatic quantum computation \cite{FGGS:adia} is a computational model where the groundstate of a simple Hamiltonian is converted into the groundstate of a more complicated Hamiltonian using adiabatic evolution with a slowly changing Hamiltonian. This model was shown to be equivalent to the standard quantum circuit model \cite{ADLLKR:adia} through the use of the Feynman-Kitaev circuit-to-Hamiltonian construction \cite{feynman:qmc, KSV:computation}. Although the class of universal Hamiltonians originally considered (nearest neighbor interactions between six dimensional particles in two dimensions) is not practically viable, perturbation gadget techniques \cite{kkr:hamsiam, OT:qma} were later used to massage it into simpler forms \cite{SV:qma, BL:adia}. However, these techniques have the disadvantage of requiring impractically high variability in the coupling strengths which appear in the Hamiltonian (see, e.g., the analysis in \cite{BGA:resource}). Given this state of affairs, it is of interest to consider how to construct a universal adiabatic quantum computer with a simple Hamiltonian without using perturbative gadgets.

An alternative type of circuit-to-Hamiltonian mapping which is conceptually distinct from the Feynman-Kitaev construction has been used by some authors \cite{margolus, janzing:pra,MMC:groundstate, mizel:ft,MLM:qadiabatic, CGW:BH,BT:spacetime}.  In these works a quantum circuit is mapped to a Hamiltonian which acts on a Hilbert space with computational and ``local'' clock degrees of freedom associated with every qubit in the circuit. This idea was first explored by Margolus in 1989 \cite{margolus}, just four years after Feynman's celebrated paper on Hamiltonian computation \cite{feynman:qmc}. Margolus showed how to simulate a one-dimensional cellular automaton by Schr\"{o}dinger time evolution with a time-independent Hamiltonian. More recently, Janzing \cite{janzing:pra} presented a scheme for universal computation  with a time-independent Hamiltonian. In reference \cite{MLM:qadiabatic} it was claimed that an approach along these lines can be used to perform universal adiabatic quantum computation; unfortunately, the analysis presented by Mizel et al. does not establish the claimed results.  The local clock idea was developed further in the recent ``space-time circuit-to-Hamiltonian construction'' and was used to prove that approximating the ground energy of a certain class of interacting particle systems is QMA-complete \cite{BT:spacetime}. 

Our main result is a new method which achieves efficient universal adiabatic quantum computation using the space-time circuit-to-Hamiltonian construction. The Hamiltonian we use describes a simple system of interacting particles which live on the edges of a two dimensional grid. To prove that the resulting algorithm is efficient we use a mapping from our Hamiltonian to the ferromagnetic XXZ model with kink boundary conditions \cite{KM:XXZchain}.  Our work can be viewed as a carefully tuned adaptation of the proposal from reference \cite{janzing:pra} to the quantum adiabatic setting. In the final part of this work we turn our attention to Janzing's proposal for computation with a time-independent Hamiltonian and we present a new analysis based on the quantum walk on Young's lattice.


\paragraph{Universal adiabatic quantum computation} We consider the universal circuit family used in reference \cite{janzing:pra} and depicted in  Fig.~\ref{fig:circform}(a), i.e., $2n$-qubit circuits which can be schematically drawn as a rotated $n \times n$ grid (shown in Fig.~\ref{fig:circform}(b)) where each plaquette $p$ on the grid corresponds to a two-qubit gate $U_p$. For technical reasons we further restrict the circuit so that many of the gates are fixed to be the identity; in particular, we set $k=\frac{\sqrt{n}}{16}$ and select the rotated $k\times k$ subgrid with its left corner in the center of the original lattice as the ``interaction region''; see Fig.~\ref{fig:circform}(c). In this interaction region the gates $U_p$ are unrestricted, elsewhere they are identity gates. 

We map such a circuit to a Hamiltonian $H(\lambda)$ which depends on a single parameter $\lambda\in [0,1]$.  We will demonstrate that (a) $H(\lambda)$ has a unique groundstate for all $\lambda\in [0,1]$, (b)  the groundstate of $H(0)$ can be efficiently prepared, (c) The output of the quantum circuit is obtained with sufficiently high probability by performing a simple measurement in the groundstate of $H(1)$, and (d) the eigenvalue gap above the ground energy of $H(\lambda)$ is lower bounded as $\frac{1}{{\rm poly}(n)}$ for all $\lambda\in [0,1]$.  These properties allow us to efficiently simulate the given quantum circuit using the quantum adiabatic algorithm with interpolating Hamiltonian $H(\lambda)$.   

We consider a multi-particle Fock space where the particles  live on the edges of the rotated $n\times n$ grid, and each particle has a two dimensional internal degree of freedom that encodes a qubit.  For an edge with midpoint that intersects horizontal and vertical coordinates $(t,w)$ (as shown in  Fig.~\ref{fig:circform}(b), these are unrotated coordinates) we define an operator $a_{t,x}[w]$ which annihilates a particle on that edge with internal state $x\in \{0,1\}$, and a number operator $n_{t,x}[w]=a_{t,x}^{\dagger}[w] a_{t,x}[w]$ which counts the number of particles in this state. $H(\lambda)$ is defined using these operators and, as we will see, it conserves the total number of particles on each horizontal line $w$. We restrict our attention to the sector where there is exactly one particle for each $w\in \{1,\ldots,2n\}$; for the rest of this paper we work in this finite-dimensional Hilbert space. The coordinate $t$ can be viewed as a local time variable (local, since different particles may be located on edges with different values of $t$). For our purposes it is irrelevant whether the particles are fermions, bosons or distinguishable particles, since each particle never strays from its horizontal line of edges. 

For a gate $U_p$ with plaquette $p$ bordered by edges $(t,w),(t+1,w),(t,w+1),(t+1,w+1)$, we define
\begin{eqnarray}
H_{\rm prop}^p= - \sum_{\alpha,\beta, \gamma, \delta} \left(\bra{\beta, \delta} \, U_p \, \ket{\alpha, \gamma} \, a^{\dagger}_{t+1,\beta}[w]  a_{t, \alpha}[w]\right. \nonumber \\
\left.a^{\dagger}_{t+1,\delta}[w+1] a_{t,\gamma}[w+1]\right)+ h.c.,
\nonumber
\label{eq:prop}
\end{eqnarray}
which allows nearest-neighbor particles to hop together. When the particles are both located before (or after) the plaquette, $H_{\rm prop}^p$ can map them onto being both located after (or before) it, while their internal qubit degrees of freedom are changed according to $U_p$ (or $U_p^\dagger$). For each $\lambda\in [0,1]$ we define a positive semidefinite operator
\begin{equation*}
H_{\rm gate}^p(\lambda)= {\bf n}_{t}[w] {\bf n}_{t}[w+1]+{\bf n}_{t+1}[w] {\bf n}_{t+1}[w+1]+\lambda H_{\rm prop}^p,
\end{equation*}
where ${\bf n}_t[w]=n_{t,0}[w]+n_{t,1}[w]$. The Hamiltonian $H(\lambda)$ is built out of these gate operators as well as an operator $H_{\rm string}$ which ensures that the time variables for different particles remain synchronized. Consider a state where the $2n$ occupied edges of the grid form a connected string with endpoints at the top and bottom (e.g., the red string in Fig.~\ref{fig:circform}(b)). Such a string can be represented by $2n$ bits $z=z_1z_2\ldots...z_{2n}$, where $0=/$ represents an edge going down and to the left and $1=\backslash$ is an edge going down and to the right, with total Hamming weight ${\rm wt}(z)=n$. The subspace of the Hilbert space with this property can be identified \footnote{This identification is just a matter of notational convenience and we use it throughout the paper. It is formally defined in the following straightforward way. The string $z$ specifies a set of occupied edges; for each $w=1,\ldots,2n$ let $(t_w,w)$ be the occupied edge on the $w$th horizontal line. Then we identify the state $\left(\Pi_{w=1}^{2n} a^{\dagger}_{t_w,x_w}[w]\right)|\rm{vac}\rangle$, where $|\rm{vac}\rangle$ is the vacuum state with no particles, with $|x\rangle|z\rangle$.} with the space
\begin{equation}
\label{eq:space1}
S_{\rm string}=\text{span}\{|x\rangle |z\rangle : \; x,z\in \{0,1\}^{2n}, {\rm wt}(z)=n\}
\end{equation}
where $z$ describes the string and $x$ represents the $2n$-qubit state encoded in the internal degrees of freedom. It is clear that $S_{\rm string}$ is an invariant subspace for each of the gate operators $H_{\rm gate}^p(\lambda)$--acting with these operators on a state in $S_{\rm string}$ can move the string forward (or backward) and modify the internal state of the particles, but the string remains connected and fixed at the bottom and top of the grid. $H(\lambda)$ will contain a term $H_{\rm string}$ which penalizes particle configurations which do not correspond to connected strings; this will ensure that the groundstate of $H(\lambda)$ is in $S_{\rm string}$. We define $H_{\rm string}=\sum_{v}H_{\rm string}^v$ as a sum of terms for each vertex in the grid, where, if vertex $v$ has four incident edges labeled $(t,w),(t+1,w),(t,w+1),(t+1,w+1)$, we let
\begin{eqnarray}
H_{\rm string}^v={\bf n}_t[w]+{\bf n}_{t+1}[w]+{\bf n}_t[w+1]+{\bf n}_{t+1}[w+1]\nonumber\\
-2\left({\bf n}_t[w]+{\bf n}_{t+1}[w]\right)\left({\bf n}_t[w+1]+{\bf n}_{t+1}[w+1]\right).
\end{eqnarray}
For vertices at the boundaries of the grid which have degree $<4$, this definition is modified so that it only includes operators for the edges which are present. Note that $H_{\rm string}\geq 0$ in the Hilbert space we are working in (the space with exactly one particle per horizontal line), and its nullspace is equal to $S_{\rm string}$. More generally, a particle configuration corresponding to a set of occupied edges which form $L$ string segments which are disconnected from one another has energy $2L-2$, the number of ``loose ends".  In particular, the smallest nonzero eigenvalue of $H_{\rm string}$ is $2$.

\begin{figure}[tb]
\centering
\subfloat[]
{
\begin{tikzpicture}
\tikzset{     position label/.style={        below = 3pt,        text height = 1.5ex,        text depth = 1ex     },    brace/.style={      decoration={brace, mirror},      decorate    } }

\begin{scope}[scale=0.3]
\foreach \x in {1,2,3,4}
{
	
	\foreach \y in {1,2,3,4}
	{
	\draw[fill=gray] (\x+\y,\x-\y) rectangle (\x+\y+1,\x-\y-1);
	}
}
\foreach \y in {-4,...,3}
{
\draw (2,\y)--(9,\y);
\node at (2,-6) {};
}
\end{scope}
\end{tikzpicture}
}
\hspace{0.001cm}
\subfloat[]
{
\begin{tikzpicture}[scale=0.37]
\tikzset{     position label/.style={        below = 3pt,        text height = 1.5ex,        text depth = 1ex     },    brace/.style={      decoration={brace, mirror},      decorate    } }
\begin{scope}[rotate=45]
\draw (0,0)--(0,4)--(4,4)--(4,0)--(0,0);
\draw[color=gray] (0,1)--(4,1);
\draw[color=gray] (0,2)--(4,2);
\draw[color=gray] (0,3)--(4,3);
\draw[color=gray] (1,0)--(1,4);
\draw[color=gray] (2,0)--(2,4);
\draw[color=gray] (3,0)--(3,4);
\draw[thick,color=red] (0,0)--(0,1)--(1,1)--(1,2)--(2,2)--(3,2)--(3,4)--(4,4);
\end{scope}
\node at (-2.465,-1) {$1$};
\node at (-2.465+0.707*7,-1) {$8$};
\node at (0,-1.3){$t$};
\foreach \x in {0,...,7}
{
\draw (-2.465+0.707*\x,-0.5)--(-2.465+0.707*\x,-0.25);
}
\draw (-2.465,-0.5)--(-2.465+0.707*7,-0.5);

\foreach \x in {0,...,7}
{
\draw(3.1,0.3+0.707*\x)--(3.35,0.3+0.707*\x);
}
\draw (3.35,0.3)--(3.35,0.3+0.707*7);
\node at (3.85,0.3+0.707*7) {$1$};
\node at (3.85,0.3) {$8$};
\node at (4.15,2.7) {$w$};
\path [use as bounding box] (-3,-2) rectangle (5,6);
\end{tikzpicture}
}
\hspace{0.0001cm}
\subfloat[]
{
\begin{tikzpicture}[scale=0.18]
\tikzset{     position label/.style={        below = 3pt,        text height = 1.5ex,        text depth = 1ex     },    brace/.style={      decoration={brace, mirror},      decorate    } }
\begin{scope}[rotate=45]
\draw (0,0)--(0,8)--(8,8)--(8,0)--(0,0);
\draw[fill=black] (4,4)--(6,4)--(6,2)--(4,2)--(4,4);
\draw[thick,dotted] (4,4)--(4,8);
\draw[thick,dotted] (4,4)--(0,4);
\node at (-1,4) {$\frac{n}{2}$};
\node at (4,9) {$\frac{n}{2}$};
\end{scope}
\node at (0,-3.4){};
\end{tikzpicture}
}
\caption{A quantum circuit of the form shown in (a) (each gray square is a two qubit gate) is mapped to a Hamiltonian which describes a system of interacting particles that live on the edges of the rotated grid shown in (b). In the groundstate, the edges occupied by particles form a connected string, as illustrated by the red line. (c) Many of the gates are fixed to be the identity; the gates which are unrestricted correspond to plaquettes within a $k\times k$ subgrid, the ``interaction region'', with left corner in the center of the grid.}
\label{fig:circform}
\end{figure}
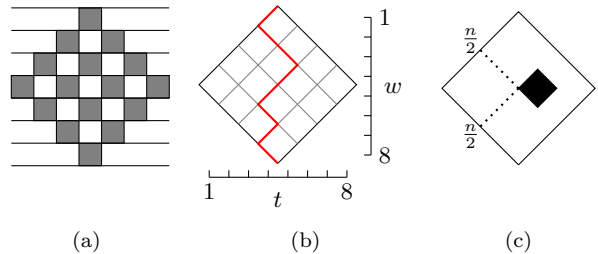
We are now ready to define the Hamiltonian $H(\lambda)$. For $\lambda \in [0,1]$ we let
\begin{align}
H_{\rm circuit}(\lambda)&= \sum_p H_{\rm gate}^p(\lambda)+ \sqrt{1-\lambda^2} H_{\rm init} \nonumber \\
H(\lambda) &=H_{\rm string}+H_{\rm circuit}(\lambda)+H_{\rm input},  \nonumber
\end{align}
where $H_{\rm init}={\bf n}_{n+1}[w=1]+{\bf n}_{n+1}[w=2n]$ is chosen so that in the groundstate of $H(0)$ all particles are located at the left boundary of the grid, and $H_{\rm input}=\sum_{w=1}^{2n} \sum_{t\leq n}  n_{t,1}[w]$ ensures that the internal state of each particle is correctly initialized to $|0\rangle$ when the particle is on the left-hand side of the grid. We now investigate the groundspace of $H(\lambda)$.

To begin, observe that $H_{\rm string}$ commutes with each of the plaquette operators $H^p_{\rm prop}$  \footnote{To see why $[H_{\rm string}, H^p_{\rm prop}]=0$, recall that the eigenvectors of $H_{\rm string}$ with eigenvalue $2L-2$ correspond to particle configurations with $L$ disconnected string segments, and note that applying $H^p_{\rm prop}$ to such a state does not change the number of string segments.} and also with each of the number operators $n_{t,z}[w]$. Thus $[H_{\rm string},H(\lambda)]=0$. As noted above, the ground energy of $H_{\rm string}$ is zero and its first excited energy is $2$. In the following we show that the smallest eigenvalue of $H(\lambda)$ within the space $S_{\rm string}$  is $\sqrt{1-\lambda^2}$. Since  $\sqrt{1-\lambda^2}<2$ this establishes that the corresponding eigenvector of $H(\lambda)$ is the groundstate.

First consider $H(0)$. Since $\sum_p H^p_{\rm gate}(0)$ has minimal energy when the string is either $1^n0^n$ or $z_{\rm init}=0^n1^n$, and since $H_{\rm init}$ penalizes configurations where the first edge of the string is $\backslash$ or the last edge is $/$, we see that the groundspace of $H_{\rm{circuit}}(0)+H_{\rm{string}}$ (with eigenvalue $1$)  is spanned by states $\ket{x}\ket{z_{\rm init}}$. The term $H_{\rm input}$ penalizes all of these states except $\ket{0^{2n}}\ket{z_{\rm init}}$ which is the unique groundstate of $H(0)$, with ground energy $1$. Note that our adiabatic quantum computation can be efficiently initialized since this state is easy to prepare.

To understand the groundspace of $H(\lambda)$ when $\lambda>0$, it will be convenient to work with a different basis for the space $S_{\rm string}$ which builds in the details of the quantum circuit. For any configuration of the string $z\in \{0,1\}^{2n}$ with $\text{wt}(z)=n$, let $V(z)$ be the unitary equal to the product of all the two-qubit gates associated with plaquettes which lie to the left of the string. In other words $V(z)$ is the total unitary of the partially completed circuit with boundary described by $z$. Define basis vectors 
\begin{equation}
|x,z\rangle_{V} =V(z)|x\rangle|z\rangle \qquad x,z\in \{0,1\}^{2n}\;, \text{wt}(z)=n
\label{eq:nicebasis}
\end{equation}
which span $S_{\rm string}$. The action of $H_{\rm{circuit}}(\lambda)$ in this basis has a nice form: it acts nontrivially only on the string degree of freedom; the two-qubit gates which make up the circuit are ``rotated away''.  Moreover, its action on the string register is equivalent (up to a term proportional to the identity and a multiplicative constant) to the ferromagnetic XXZ chain with kink boundary conditions
\begin{eqnarray}
_{V}\langle x',z'|\left(H_{\rm{circuit}}(\lambda)-\sqrt{1-\lambda^2}I\right)|x,z\rangle_{V}\nonumber\\
=2\delta_{x',x} \langle z'|H_{\rm XXZ}(\lambda)|z\rangle
\label{eq:xxz}
\end{eqnarray}
where \cite{KM:XXZchain} (writing $X,Y,Z$ for the Pauli operators) 
\begin{align}
&H_{\rm XXZ}(\lambda)=\frac{1}{4}\sqrt{1-\lambda^2}(Z_{2n}-Z_{1}) \nonumber\\
&-\frac{1}{4}\sum_{w=1}^{2n-1}\left[ (Z_wZ_{w+1}-I)+\lambda(X_wX_{w+1}+ Y_wY_{w+1})\right]\nonumber\\
&=\sum_{w=1}^{2n-1}\ket{\Psi_{q(\lambda)}}\bra{\Psi_{q(\lambda)}}_{w,w+1}, \quad \lambda=\frac{2}{q(\lambda)+q(\lambda)^{-1}},
\label{eq:xxz2}
\end{align}
where $0\leq q(\lambda)\leq 1$ and the $q$-deformed singlet equals $\ket{\Psi_{q}}=\frac{1}{\sqrt{q^2+1}}(\ket{10}-q\ket{01})$. This spin chain can be viewed as a q-analogue of the ferromagnetic Heisenberg chain; it has a remarkable ${\rm SU}_q(2)$ quantum group symmetry which is a deformation of the SU(2) symmetry of the Heisenberg ferromagnet. Its spectral gap, groundspace \cite{KM:XXZchain} and excitations are known \cite{KM:XXZchain, ASW:chain}. In the Supplementary Material we derive an expression for the zero energy groundstate of $H_{\rm XXZ}(\lambda)$ in the sector with Hamming weight $n$. Using this expression and (\ref{eq:xxz}) we immediately obtain a spanning basis for the $\sqrt{1-\lambda^2}$ energy groundspace of $H_{\rm{string}}+H_{\rm{circuit}}(\lambda)$, given by (up to normalization)
\begin{equation}
|\Phi_{\lambda}(x)\rangle=\sum_{z\colon {\rm wt}(z)=n} q(\lambda)^{-A(z)} |x,z\rangle_V \quad x\in \{0,1\}^{2n},
\label{eq:gs_beforeinput}
\end{equation}
where $A(z)=\sum_{j=1}^{2n} jz_j-\frac{n(n+1)}{2}$  is the area of the grid which lies to the right of the string (and $z_j$ is the $j$th bit of $z$). We see that when $\lambda<1$ the associated probability distribution over strings favors the left-hand side of the grid; the most likely string is $z_{\rm init}=0^n1^n$ (with $A(z_{\rm init})=n^{2}$), the least likely is $1^n0^n$ (with $A(z)=0$), etc.  The term $H_{\rm{input}}$ penalizes every state (\ref{eq:gs_beforeinput}) except $|\Phi_{\lambda}(0^{2n})\rangle$, which is the unique groundstate of $H(\lambda)$, with energy $\sqrt{1-\lambda^2}$, for $0<\lambda\leq 1$.

The groundstate $|\Phi_{\lambda=1}(0^{2n})\rangle$ of the final Hamiltonian is a uniform superposition over basis vectors $|0^{2n},z\rangle_V$ corresponding to all possible configurations of the string $z$. To obtain the output of the quantum circuit we measure the locations of the $2k$ particles which lie on horizontal lines that intersect the interaction region. If we find  that all of these particles are located on edges to the right of the interaction region then their internal degrees of freedom give the output of the quantum circuit.  Since the string is connected, this is guaranteed to occur as long as the $n$th particle (i.e., the particle on horizontal line $w=n$) is located on an edge which lies to the right of the interaction region. In the Supplementary Material we show that, with our choice $k=\frac{\sqrt{n}}{16}$, this occurs with probability lower bounded by a positive constant. Finally, we lower bound the eigenvalue gap of $H(\lambda)$.
\begin{theorem}
The smallest nonzero eigenvalue of ${H(\lambda)-\sqrt{1-\lambda^2}I}$ is at least $ \frac{1}{4n+3}(1-\lambda \cos\left(\frac{\pi}{2n})\right)$ for all $\lambda\in [0,1]$. 
\end{theorem}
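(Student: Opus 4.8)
\emph{Reduction to $S_{\rm string}$.} The plan is to first exploit the conservation law $[H_{\rm string},H(\lambda)]=0$ established above (it also covers $H_{\rm input}$, which is a sum of number operators), which makes $H(\lambda)-\sqrt{1-\lambda^2}I$ block diagonal with respect to the eigenspaces of $H_{\rm string}$. On $S_{\rm string}^{\perp}$ one has $H_{\rm string}\geq 2$ while $H_{\rm circuit}(\lambda),H_{\rm input}\geq 0$, so there $H(\lambda)-\sqrt{1-\lambda^2}I\geq (2-\sqrt{1-\lambda^2})I\geq I$, which already beats the claimed bound (whose value is at most $\tfrac{1}{4n+3}<1$). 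So it suffices to lower bound the smallest nonzero eigenvalue of $\tilde H:=(H(\lambda)-\sqrt{1-\lambda^2}I)|_{S_{\rm string}}$. By (\ref{eq:xxz}), written in the basis (\ref{eq:nicebasis}) this is a sum of two positive semidefinite pieces: $2H_{\rm XXZ}(\lambda)$ acting on the string register (trivially on $x$), plus $H_{\rm input}$.

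\emph{The two pieces.} For the first I would quote Koma--Nachtergaele \cite{KM:XXZchain}: in the Hamming-weight-$n$ sector $H_{\rm XXZ}(\lambda)$ has spectral gap $1-\lambda\cos\frac{\pi}{2n}$ above its (unique, in that sector) zero-energy state, so $2H_{\rm XXZ}(\lambda)$ has gap $\gamma_1:=2(1-\lambda\cos\frac{\pi}{2n})$ above its kernel $K=\text{span}\{|\Phi_\lambda(x)\rangle : x\in\{0,1\}^{2n}\}$; the vectors (\ref{eq:gs_beforeinput}) are mutually orthogonal with common squared norm $N:=\sum_{z:{\rm wt}(z)=n}q(\lambda)^{-2A(z)}$. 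For the second piece the geometry of Fig.~\ref{fig:circform}(c) is essential: the interaction region is placed so that every gate acting on qubit $w$ sits at columns $\geq n$, and since $H_{\rm input}$ involves only columns $\leq n$, whenever the particle on line $w$ occupies an edge with $t_w(z)\leq n$ the gate is to the right of the string and $V(z)$ acts as the identity on qubit $w$. Hence $H_{\rm input}$ is \emph{diagonal} in the basis (\ref{eq:nicebasis}), with $H_{\rm input}|x,z\rangle_V=\#\{w:t_w(z)\leq n,\ x_w=1\}\cdot|x,z\rangle_V$; so $\|H_{\rm input}|_{S_{\rm string}}\|\leq 2n$, and $\Pi_K H_{\rm input}\Pi_K$ is diagonal in the orthonormal basis $\{N^{-1/2}|\Phi_\lambda(x)\rangle\}$ with eigenvalues $\mu(x)=\sum_{w:\,x_w=1}\Pr_{\pi_\lambda}[t_w(z)\leq n]$, where $\pi_\lambda(z):=q(\lambda)^{-2A(z)}/N$ is precisely the string distribution appearing in (\ref{eq:gs_beforeinput}). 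In particular $\mu(0^{2n})=0$ and $|\Phi_\lambda(0^{2n})\rangle\in\ker\tilde H$, re-confirming this is the unique ground state.

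\emph{The probabilistic estimate (the crux).} I would next show $\Pr_{\pi_\lambda}[t_w(z)\leq n]\geq\tfrac12$ for every $w$ and every $\lambda\in[0,1]$; this forces $\mu(x)\geq\tfrac12$ for all $x\neq 0^{2n}$, i.e.\ the smallest nonzero eigenvalue of $\Pi_K H_{\rm input}\Pi_K$ is $\gamma_2\geq\tfrac12$. The bit-flip $z\mapsto\bar z$ is a left--right reflection of the grid that swaps $\{t_w(z)\leq n\}$ with $\{t_w(z)>n\}$ and satisfies $A(\bar z)=n^2-A(z)$. At $\lambda=1$ ($q=1$) the measure $\pi_1$ is uniform on weight-$n$ strings and invariant under this flip, so $\Pr_{\pi_1}[t_w\leq n]=\tfrac12$ exactly. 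For $\lambda<1$ ($q<1$) the reweighting $\pi_\lambda(z)\propto\prod_j(q^{-2j})^{z_j}$ (conditioned on ${\rm wt}(z)=n$) shifts the string to the left monotonically, so $\Pr_{\pi_\lambda}[t_w\leq n]$ only grows; since $\{t_w(z)\leq n\}$ is a decreasing event in the prefix bits of $z$, I would make this precise by a monotone coupling / FKG-type comparison of $\pi_\lambda$ with $\pi_1$. (The endpoint $\lambda=0$ is the sanity check: $\pi_0$ collapses onto $z_{\rm init}=0^n1^n$ and the probability is $1$.) This bookkeeping is the natural content of the Supplementary Material.

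\emph{Combining.} Finally I would invoke the standard nullspace-projection bound for a sum $H_1+H_2$ of positive semidefinite operators with $\ker H_1\cap\ker H_2\neq\{0\}$: its smallest nonzero eigenvalue is at least $\gamma_1\gamma_2/(\gamma_1+\gamma_2+\|H_2\|)$, where $\gamma_1$ is the smallest nonzero eigenvalue of $H_1$ and $\gamma_2$ that of $\Pi_{\ker H_1}H_2\Pi_{\ker H_1}$. Applying this on $S_{\rm string}$ with $H_1=2H_{\rm XXZ}(\lambda)$ and $H_2=H_{\rm input}$, using $\gamma_1=2(1-\lambda\cos\frac{\pi}{2n})\leq 2$, $\gamma_2\geq\tfrac12$ and $\|H_2\|\leq 2n$, yields a lower bound of the form $\frac{1}{O(n)}(1-\lambda\cos\frac{\pi}{2n})$; tracking the constants gives the stated $\frac{1}{4n+3}(1-\lambda\cos\frac{\pi}{2n})$. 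The main obstacle is the probabilistic estimate, in particular establishing the monotonicity in $q$ for off-centre rows $w$ (the reflection argument by itself only settles $q=1$); a secondary point requiring care is the structural claim that $V(z)$ is trivial on qubit $w$ whenever $t_w(z)\leq n$, which is precisely where the placement of the interaction region and the cutoff $t\leq n$ in $H_{\rm input}$ have to be matched, and is the reason for the particular ``identity padding'' in Fig.~\ref{fig:circform}(c).
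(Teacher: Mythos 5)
Your overall architecture matches the paper's: invoke the Nullspace Projection Lemma with $H_{\rm input}$ as the perturbing term, use the Koma--Nachtergaele gap for $H_{\rm XXZ}$, show $H_{\rm input}$ restricted to the $\{|\Phi_\lambda(x)\rangle\}$ span is diagonal with entries $\sum_w p_w(\lambda)\delta_{x_w,1}$, and prove $p_w(\lambda)\geq\tfrac12$. The cosmetic difference is that you restrict to $S_{\rm string}$ first and apply the NPL with $H_1=2H_{\rm XXZ}(\lambda)$, $H_2=H_{\rm input}$, whereas the paper keeps $H_{\rm string}$ inside $H_A=H_{\rm string}+(H_{\rm circuit}(\lambda)-\sqrt{1-\lambda^2}I)$; your split gives $\gamma_1=2(1-\lambda\cos\tfrac{\pi}{2n})$ rather than the paper's $\min\{2-\sqrt{1-\lambda^2},\,2(1-\lambda\cos\tfrac{\pi}{2n})\}\geq 1-\lambda\cos\tfrac{\pi}{2n}$, so you actually get a marginally tighter denominator, which is fine. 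You also state explicitly the structural fact that $V(z)$ acts as the identity on qubit $w$ whenever $t_w(z)\leq n$ (because the interaction region lies entirely at $t\geq n$); the paper uses this implicitly when it writes its equation for $\langle\Phi_\lambda(x)|H_B|\Phi_\lambda(y)\rangle$, so it is good that you surfaced it.

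The genuine gap, which you yourself flag, is the proof that $p_w(\lambda)\geq\tfrac12$ for $0<\lambda<1$. Your global bit-flip $z\mapsto\bar z$ correctly shows $p_w(1)=\tfrac12$, but the monotone-coupling/FKG step you propose for $\lambda<1$ is not routine: $\pi_\lambda$ is a product measure with an external field \emph{conditioned} on $\mathrm{wt}(z)=n$, and such conditional measures are not positively associated, so FKG does not apply directly, and constructing a monotone coupling of $\pi_\lambda$ with $\pi_1$ that respects the off-centre events $\{t_w(z)\leq n\}$ is itself work. The paper sidesteps monotonicity in $q$ entirely with a single combinatorial argument valid for every $\lambda\in(0,1]$: fix an edge $e$ at $t\leq n$ on line $w$ and its mirror image $e'$ about the central vertical line $L$; for each string $z$ through $e$, reflect only the portion of $z$ between the two $L$-crossings immediately before and after the visit to $e$. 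This is a bijection from strings through $e$ to strings through $e'$ that never increases the area $A(z)$ to the right of the string, hence never increases the weight $q^{-2A(z)}$, giving $\Pr_{\pi_\lambda}[\text{string through }e]\geq\Pr_{\pi_\lambda}[\text{string through }e']$ for every such $e$, and summing over $e$ on line $w$ with $t\leq n$ yields $p_w(\lambda)\geq\tfrac12$. If you replace your FKG sketch with this segment-reflection bijection, your proof closes.
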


This $\Omega(n^{-3})$ bound establishes that the adiabatic quantum computation can be performed efficiently. The proof, given in the Supplementary Material, uses the known expression for the eigenvalue gap of $H_{\rm XXZ}(\lambda)$ and a Lemma for bounding the smallest nonzero eigenvalue of an operator sum.

In an attempt to improve the success probability of a final measurement, one might consider modifying this scheme so that the groundstate of the final Hamiltonian is localized at the right side of the grid. This can be achieved by adding another segment to the adiabatic path: after reaching $H(1)$, replace $H_{\rm init}$ with $H_{\rm endit}={\bf n}_{n}[w=1]+{\bf n}_{n}[w=2n]$ and then reduce $\lambda$ from $1$ to $0$. With this choice, every state in the groundspace of the final Hamiltonian has particle configuration corresponding to the string $1^{n} 0^{n}$ on the far right. However the groundspace is degenerate (since $H_{\rm input}|x,1^{n} 0^{n}\rangle_V=0$ for all computational input states $x$). Although the error-free Hamiltonian has a symmetry which prevents transitions between the groundstate corresponding to the correct input and the other wrong-input states, an imperfect realization could potentially derail the computation.


\paragraph{Universal computation with a time-independent Hamiltonian} 
We now discuss a bare-bones version of the related scheme from \cite{janzing:pra}. The quantum circuit family is the same as before, except that now the interaction region is chosen to be the first $K^2$ gates in the circuit with $K=\frac{n}{4}$, i.e., the $K\times K$ subgrid at the far left side of the $n\times n$ grid. The circuit is simulated using Schr\"{o}dinger time evolution with initial state $|0^{2n},z_{\rm init}\rangle_V$ and time-independent Hamiltonian $H_{\rm prop}=\sum_p H_{\rm prop}^p$. After evolving for time $t$, one measures the location of each particle and if one finds them all outside the interaction region then the internal degrees of freedom give the output of the circuit. Janzing's analysis of this scheme is based on an equivalence between $H_{\rm prop}$ and the XY model, which can be diagonalized using a Jordan-Wigner transformation (a unitary mapping to a system of noninteracting fermions in one dimension). In the Supplementary Material we extend one of Janzing's Theorems to prove that the above scheme efficiently simulates a quantum circuit. Specifically we prove that, if the evolution time $t$ is randomly (uniformly) chosen in the interval $[0,T]$ with $T=cn^3$ (for some constant $c$), the probability to measure all the particles outside the interaction region is at least $\frac{1}{4}+\mathcal{O}(\frac{1}{\sqrt{n}})$.

Here we focus on the limit $n\rightarrow\infty$ and directly analyze the time evolution in the given basis without using a Jordan-Wigner transformation. In this way we obtain a detailed picture of the dynamics of the string. To begin, note that a string is associated with a Young diagram (or, equivalently, an integer partition) obtained by rotating the portion of the grid which lies to the left of the string by 45 degrees.  In the limit $n\rightarrow\infty$, the set of string configurations is in one-to-one correspondence with the set of Young diagrams. In the basis (\ref{eq:nicebasis}), $H_{\rm prop}$ acts nontrivially only on the string degree of freedom and it acts on this space as the adjacency matrix $H_\mathbb{Y}$ of Young's lattice, shown in Fig. \ref{fig:younglattice}. In this infinite graph two Young diagrams are connected by an edge if they differ by one box. The dynamics of our system is given by the quantum walk on Young's lattice starting from a very special initial state: the empty partition $\emptyset$. This quantum walk can be solved exactly \cite{S88}; the solution is
\begin{equation}
e^{-iH_{\mathbb{Y}}t}|\emptyset\rangle=e^{-\frac{t^2}{2}}\sum_{m=0}^{\infty}\frac{(-it)^m}{\sqrt{m!}}|\phi_m\rangle,
\label{eq:exactsol}
\end{equation}
where the normalized state $|\phi_m\rangle=\frac{1}{\sqrt{m!}}\sum_{\sigma\dashv m}d_\sigma|\sigma\rangle$, $\sigma\dashv m$ indicates that $\sigma$ is a partition of $m$, and $d_\sigma$ is the dimension of the irreducible representation of the symmetric group $S_m$ associated with $\sigma$ (given by the hook-length formula). For completeness, in the Supplementary Material we review the derivation of equation (\ref{eq:exactsol}).

We see that the quantum walk takes place in a tiny subspace of the full Hilbert space spanned by $\left\{ |\phi_{m}\rangle:\, m\geq0\right\}$. The probability distribution over partitions $\sigma$ as a function of time is given by $p(\sigma,t)=(m!)^{-2}e^{-t^{2}}t^{2m}d_{\sigma}^{2}$ (where $\sigma\dashv m$) which is a \textit{Poissonized Plancherel measure} \cite{R14}. The marginal distribution of $m$  is $\text{Poisson}$ with mean and variance $\mathbb{E}[m]=\text{Var}(m)=t^{2}$. In our case $m$ represents the area to the left of the string (i.e., the number of gates that have been applied) and this shows that, roughly speaking, this area increases quadratically. For large times the random variable $m$ is peaked about its mean in the sense that $\frac{\sqrt{\text{Var}(m)}}{\mathbb{E}[m]}$ is small. The conditional distribution over partitions $\lambda\vdash m$ for fixed $m$ is the widely studied Plancherel measure $\rho_{m}(\sigma)=\frac{d_{\sigma}^{2}}{m!}$, which is known to exhibit a limiting behaviour \cite{L77}. Imagine sampling a partition from $\rho_{m}$, drawing it in the x-y plane and then rescaling both axes by $\frac{1}{\sqrt{m}}$.  As $m\rightarrow\infty$, the resulting picture approaches a fixed shape with probability $\rightarrow1$ \cite{L77,R14} (we include a plot of this shape in the Supplementary Material).  Roughly speaking, for large times we envision the string as a wavefront which moves with constant speed and with scaled shape described by this limit Theorem.

 Finally, note that although it was convenient to consider the limit $n\rightarrow\infty$, we expect this analysis to be approximately valid for finite $n$ when $t$ is small enough so that (\ref{eq:exactsol}) is supported almost entirely on partitions contained in the left-hand side of the rotated $n\times n$ grid.

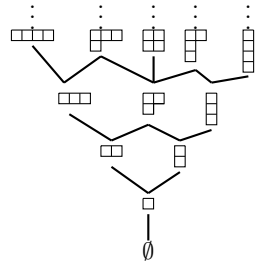
\begin{figure}[t]
\centering
\begin{tikzpicture}[scale=0.14]
\begin{scope}[yshift=1cm]
\begin{scope}[xshift=-1.5 cm]
\draw[color=black] (0,0)--(4,0)--(4,-1)--(0,-1)--(0,0);
\draw[color=black] (1,0)--(1,-1);
\draw[color=black] (2,0)--(2,-1);
\draw[color=black] (3,0)--(3,-1);
\end{scope}

\begin{scope}[xshift=6cm]
\draw[color=black] (0,0)--(3,0)--(3,-1)--(1,-1)--(1,-2)--(0,-2)--(0,0);
\draw[color=black] (1,0)--(1,-1);
\draw[color=black] (2,0)--(2,-1);
\draw[color=black] (0,-1)--(1,-1);
\end{scope}

\begin{scope}[xshift=11cm]
\draw[color=black] (0,0)--(2,0)--(2,-2)--(0,-2)--(0,0);
\draw[color=black] (1,0)--(1,-2);
\draw[color=black] (0,-1)--(2,-1);
\end{scope}

\begin{scope}[xshift=15cm,rotate=90,xscale=-1]
\draw[color=black] (0,0)--(3,0)--(3,-1)--(1,-1)--(1,-2)--(0,-2)--(0,0);
\draw[color=black] (1,0)--(1,-1);
\draw[color=black] (2,0)--(2,-1);
\draw[color=black] (0,-1)--(1,-1);
\end{scope}

\begin{scope}[xshift=21.5cm, rotate=-90]
\draw[color=black] (0,0)--(4,0)--(4,-1)--(0,-1)--(0,0);
\draw[color=black] (1,0)--(1,-1);
\draw[color=black] (2,0)--(2,-1);
\draw[color=black] (3,0)--(3,-1);
\end{scope}
\end{scope}

\begin{scope}[yshift=-5cm,xshift=3cm]
\draw[color=black] (0,0)--(3,0)--(3,-1)--(0,-1)--(0,0);
\draw[color=black] (1,0)--(1,-1);
\draw[color=black] (2,0)--(2,-1);
\end{scope}

\begin{scope}[yshift=-5cm, xshift=11cm]
\draw[color=black] (0,0)--(2,0)--(2,-1)--(1,-1)--(1,-2)--(0,-2)--(0,0);
\draw[color=black] (1,0)--(1,-2);
\draw[color=black] (0,-1)--(2,-1);
\end{scope}

\begin{scope}[yshift=-5cm,xshift=18cm,rotate=270]
\draw[color=black] (0,0)--(3,0)--(3,-1)--(0,-1)--(0,0);
\draw[color=black] (1,0)--(1,-1);
\draw[color=black] (2,0)--(2,-1);
\end{scope}


\begin{scope}[yshift=-10cm,xshift=7cm]
\draw[color=black] (0,0)--(2,0)--(2,-1)--(0,-1)--(0,0);
\draw[color=black] (1,0)--(1,-1);
\end{scope}

\begin{scope}[yshift=-10cm,xshift=15cm,rotate=270]
\draw[color=black] (0,0)--(2,0)--(2,-1)--(0,-1)--(0,0);
\draw[color=black] (1,0)--(1,-1);
\end{scope}


\begin{scope}[yshift=-15cm,xshift=11cm]
\draw[color=black] (0,0)--(1,0)--(1,-1)--(0,-1)--(0,0);
\end{scope}


\node at (11.5,-20){$\emptyset$};


\draw[thick] (11.5,-19)--(11.5, -16.5);
\draw[thick] (11.5,-14.5)--(8, -12);
\draw[thick] (11.5,-14.5)--(14.5, -12.5);
\draw[thick] (14.5,-9.5)--(17.5, -8.5);
\draw[thick] (14.5,-9.5)--(11.5, -8);
\draw[thick] (8,-9.5)--(11.5, -8);
\draw[thick] (8,-9.5)--(4, -7);

\draw[thick] (3.5,-4)--(0.5, -0.5);
\draw[thick] (3.5,-4)--(7, -1.5);
\draw[thick] (12,-4)--(7, -1.5);
\draw[thick] (12,-4)--(12, -1.5);
\draw[thick] (12,-4)--(16, -2.8);
\draw[thick] (17.5,-4)--(16, -2.8);
\draw[thick] (17.5,-4)--(21, -3.4);
\node at (0.5, 3) {$\vdots$};
\node at (7, 3) {$\vdots$};
\node at (12, 3) {$\vdots$};
\node at (16, 3) {$\vdots$};
\node at (21, 3) {$\vdots$};
\end{tikzpicture}
\caption{Young's lattice.}
\label{fig:younglattice}
\end{figure}


\paragraph{Acknowledgements}
We thank Andrew Childs, Robert K\"onig, and Seth Lloyd for helpful discussions. BMT and AV acknowledge funding through the European Union via QALGO FET-Proactive Project No. 600700.  DG is supported in part by NSERC. This research was supported in part by Perimeter Institute for Theoretical Physics. Research at Perimeter Institute is supported by the Government of Canada through Industry Canada and by the Province of Ontario through the Ministry of Economic Development $\&$ Innovation.

\bibliography{refsQMA}


\newpage
\onecolumngrid
\section{Supplementary Material}
\subsection{Mapping to the ferromagnetic XXZ spin chain with kink boundary conditions}
In this Section we present a derivation of equation (\ref{eq:xxz}). In the basis (\ref{eq:space1}) we have, for $z' \neq z$, 
\begin{align}
& \bra{x'}\bra{z'} H_{\rm circuit}(\lambda)\ket{x}\ket{z}=\bra{x'}\bra{z'} \lambda\sum_{p}H^p_{\rm prop}\ket{x}\ket{z}\nonumber\\
&=- \lambda \bra{x'}V (z')V^\dagger(z)\ket{x}\sum_{w=1}^{2n-1}\bra{z'}\left( |01\rangle\langle 10|_{w,w+1} +|10\rangle\langle 01|_{w,w+1}\right)|z\rangle\nonumber\\
&=- \frac{\lambda}{2} \bra{x'}V (z')V^\dagger(z)\ket{x}\sum_{w=1}^{2n-1}\bra{z'}\left( X_{w} X_{w+1}+Y_w Y_{w+1}\right) \ket{z}.
\label{eq:offdiagterms}
\end{align}
Note that this expression is only nonzero when $z'$ differs from $z$ in two consecutive bits $01$ or $10$, and when this happens the unitary $V (z')V^\dagger(z)$ is always either $U_p$ or $U_p^\dagger$ for some plaquette $p$.

Letting $K(z)$ be the number of occurances of $01$ or $10$ in $z$ (the number of ``kinks'' in the string) we have
\begin{align}
& \bra{x'}\bra{z} H_{\rm circuit}(\lambda) \ket{x}\ket{z}\nonumber\\
&=\delta_{x',x}\left(K(z)+\sqrt{1-\lambda^2}(\delta_{z_1,1}+\delta_{z_{2n},0})\right)\nonumber \\
&=\delta_{x',x}\left(\bra{z} \sum_{w=1}^{2n-1} \frac{1}{2}\left(I-Z_w Z_{w+1}\right)+ \frac{1}{2}\sqrt{1-\lambda^2}(Z_{2n}-Z_1)\ket{z} +\sqrt{1-\lambda^2}\right). 
\label{eq:diagterms}
\end{align}

Now let $W=\sum_{z\colon {\rm wt}(z)=n}V(z) \otimes \ket{z}\bra{z}$ so that $W|x\rangle|z\rangle=|x,z\rangle_V$. Using equations  (\ref{eq:offdiagterms}) and (\ref{eq:diagterms}) we see that
\begin{align}
_{V}\langle x',z'|H_{\rm{circuit}}(\lambda)|x,z\rangle_{V} & =\langle x'|\langle z'| W^\dagger H_{\rm{circuit}}(\lambda)W|x\rangle |z\rangle\nonumber \\
&= 2\delta_{x',x} \langle z'|H_{\rm XXZ}(\lambda)|z\rangle+\delta_{z',z}\delta_{x',x}  \sqrt{1-\lambda^2}.
\end{align}


\subsection{Groundspace of $H_{\rm{XXZ}}(\lambda)$} 
In this Section we review the exact solution for the groundspace of $H_{\rm{XXZ}}(\lambda)$ for $0<\lambda\leq 1$. We'll use Bravyi's transfer matrix method for quantum 2-SAT \cite{B06}. Our goal is to show that there is a unique zero energy groundstate in the sector with Hamming weight $n$ and to derive an expression for this state. 

The $q$-deformed singlet is related to the standard singlet state by
\[
(1\otimes T_{q(\lambda)}) |\Psi_{q(\lambda)}\rangle \propto |\Psi_1\rangle
\]
where $\propto$ means proportional to and 
\[
T_q=\left(\begin{array}{cc}
1 & 0\\
0 & q^{-1}
\end{array}\right).
\]
Letting $Q=T_{q(\lambda)}\otimes T_{q(\lambda)}^{2}\otimes T_{q(\lambda)}^{3}\otimes...\otimes T_{q(\lambda)}^{2n}$, we have (for each i)
\[
Q|\Psi_{q(\lambda)}\rangle\langle\Psi_{q(\lambda)}|_{i,i+1}Q=|\Psi_{1}\rangle\langle\Psi_{1}|_{i,i+1}\otimes M_{[2n]\setminus\{i,i+1\}}^{i}
\]
 where $M^{i}$ is a positive operator and we used the fact that $A\otimes A|\Psi_{1}\rangle=\left(\det A\right)|\Psi_{1}\rangle$ for all invertible $A$. From this and equation  (\ref{eq:xxz2}) we see that the nullspace of $QH_{\rm{XXZ}}(\lambda)Q$ is equal to the symmetric subspace (the nullspace of $H_{\rm{XXZ}}(1)$). In other words the zero energy groundspace of $H_{\rm{XXZ}}(\lambda)$ is equal to the image of the symmetric subspace under the invertible mapping Q. In particular, the unique groundstate of $H_{\rm{XXZ}}(\lambda)$ with Hamming weight $n$ is (up to normalization)
\begin{equation}
Q\left(\sum_{\rm{wt}(z)=n}|z\rangle\right)=\sum_{\rm{wt}(z)=n}q(\lambda)^{-f(z)}|z\rangle
\label{eq:chilambda}
\end{equation}
where $f(z)=\sum_{j=1}^{n}jz_{j}$ and $z_j$ is the $j$th bit of $z$. Finally, note that we obtain the same state (up to normalization) if we replace $f(z)$ in the above equation by $A(z)=f(z)-\frac{n(n+1)}{2}$ which is the area of the rotated grid that lies to the right of the string.


\subsection{Lower bound on the success probability for a measurement of the groundstate of $H(1)$}

Recall that we consider a $k\times k$ interaction region rotated and placed with its left corner at the centre of the grid as shown in Figure \ref{fig:circform}(a), with $k=\frac{\sqrt{n}}{16}$. In this Section we consider the particle on the horizontal line labeled $w=n$. We prove that a measurement of the location of this particle in the groundstate $|\Phi_1 (0^{2n})\rangle$ of $H(1)$ results in an edge which lies to the right of the interaction region with probability lower bounded by a constant independent of $n$. If this occurs then the whole string lies to the right of the interaction region and the internal degrees of freedom encode the output of the quantum circuit.  In this way one obtains the $2k$ qubits in the output of the quantum circuit with constant probability.

The state $|\Phi_1 (0^{2n})\rangle$ corresponds to a uniform superposition over configurations of the string. We consider the marginal probability distribution for the edge of the string which lies on the line $w=n$  (since the $w$th edge of the string gives the location of the $w$th particle).

It will be convenient to use the rotated coordinate system shown in Fig. \ref{fig:rotated_grid}. Each vertex of the grid is labeled $(i,j)$ with
$i,j\in\{0,\ldots,n\}$ and each edge is labeled by a tuple $(i,j,x)$ where $(i,j)$ is the location of its ``upper vertex'' (the one closer
to the top of the Figure) and $x\in\{0,1\}$ indicates whether the edge goes left or right out of this vertex. Here $x=0$ corresponds to an edge going left $/$ while $x=1$ is an edge going right $\backslash$. The edges where the $w$th particle (with $w\in\{1,\ldots,2n\}$) may be located are those edges with $i+j=w-1$. For example, when $w\leq n$ these are 
\[
\left\{ (w-1-j,j,x):\, j\in\{0,\ldots w-1\}\: x\in\{0,1\}\right\}.
\]

\begin{figure}
\centering
\begin{tikzpicture}[scale=0.65]
\tikzset{     position label/.style={        below = 3pt,        text height = 1.5ex,        text depth = 1ex     },    brace/.style={      decoration={brace, mirror},      decorate    } }
\begin{scope}[rotate=45]
\draw (0,0)--(0,4)--(4,4)--(4,0)--(0,0);
\draw[color=gray] (0,1)--(4,1);
\draw[color=gray] (0,2)--(4,2);
\draw[color=gray] (0,3)--(4,3);
\draw[color=gray] (1,0)--(1,4);
\draw[color=gray] (2,0)--(2,4);
\draw[color=gray] (3,0)--(3,4);
\node at (-0.25,-0.25){4};
\node at (-0.25,-0.25+1){3};
\node at (-0.25,-0.25+2){2};
\node at (-0.25,-0.25+3){1};
\node at (-0.25,-0.25+4){0};
\draw [decoration={brace},decorate] (-0.75,-0.25) -- (-0.75,3.75);
\node at (-1.25,2){j};
\draw [decoration={brace},decorate] (0.25,4.75) -- (4.25,4.75);
\node at (2,5.25){i};
\node at (4+0.25,4+0.25){0};
\node at (3+0.25,4+0.25){1};
\node at (2+0.25,4+0.25){2};
\node at (1+0.25,4+0.25){3};
\node at (0.25,4+0.25){4};
\end{scope}
\end{tikzpicture}

\caption{\label{fig:rotated_grid} In the rotated coordinate system, vertices are labeled $(i,j)$ with $i,j\in \{0,\ldots,n\}$. The edges which dangle downwards from vertex $(i,j)$ are labeled $(i,j,x)$ with $x=0$ for an edge going left $/$ and $x=1$ for an edge going right $\backslash$.}
\end{figure}
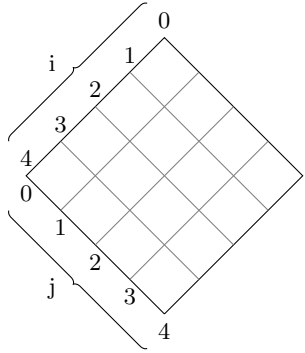

Since the distribution over strings is uniform it is simple to compute the probability $p_{(i,j,x)}$ that the string contains a given edge $(i,j,x)$
\begin{equation}
p_{(i,j,x)}=\frac{\left(\begin{array}{c}
i+j\\
j
\end{array}\right)\left(\begin{array}{c}
2n-(i+j+1)\\
n-j-x
\end{array}\right)}{\left(\begin{array}{c}
2n\\
n
\end{array}\right)}.
\label{eq:pij}
\end{equation}
To see this, note that 
\[
\left(\begin{array}{c}
i+j\\
j
\end{array}\right)
\]
is the number of string segments which connect from the top of the grid to the upper vertex of edge $(i,j,x)$ and 
\[
\left(\begin{array}{c}
2n-(i+j+1)\\
n-j-x
\end{array}\right)
\]
is the number of string segments which connect from the bottom of the grid to the lower vertex of edge $(i,j,x)$. If we fix $w=i+j+1$ and let $j$ and $x$ vary in equation (\ref{eq:pij}) then we get the marginal probability distribution for the location of particle $w$. The $n$th horizontal line intersects the interaction region at $2k=\frac{1}{8}\sqrt{n}$ edges. The total probability for the $n$th particle to be anywhere in the right-hand
side of the grid is equal to $\frac{1}{2}$. Thus, to prove that this particle has constant probability to be found to the right of the
interaction region, it is sufficient to prove a uniform bound $p_{(i,j,x)}\leq\frac{c}{\sqrt{n}}$
for some constant $c<4$ for all edges $(i,j,x)$ with $i+j+1=n$. We now take $n>1$ and prove such a bound with $c=2\sqrt{2}$.

We'll use the fact that 
\[
\frac{4^{m}}{\sqrt{4m}}<\left(\begin{array}{c}
2m\\
m
\end{array}\right)<\frac{4^{m}}{\sqrt{2m}}
\]
(equation (2.15) in reference \cite{K08}).
 Suppose that $n$ is even. Then, for $i+j+1=n$ we get
\begin{equation}
p_{(i,j,x)} <  \frac{\sqrt{4n}}{4^{n}}\left(\begin{array}{c}
i+j\\
j
\end{array}\right)\left(\begin{array}{c}
2n-(i+j+1)\\
n-j-x
\end{array}\right)
< \frac{\sqrt{4n}}{4^{n}}\left(\begin{array}{c}
n\\
\frac{n}{2}
\end{array}\right)\left(\begin{array}{c}
n\\
\frac{n}{2}
\end{array}\right)
 <  \frac{\sqrt{4n}}{4^{n}}\left(\frac{4^{\frac{n}{2}}}{\sqrt{n}}\right)^2
= \frac{2}{\sqrt{n}}\label{eq:p_ij_bnd}
\end{equation}
If $n$ is odd we use almost the same series of inequalities to get

\begin{equation}
p_{(i,j,x)} <  \frac{\sqrt{4n}}{4^{n}}\left(\begin{array}{c}
n-1\\
\frac{n-1}{2}
\end{array}\right)\left(\begin{array}{c}
n+1\\
\frac{n+1}{2}
\end{array}\right)
 <  \frac{\sqrt{4n}}{4^{n}}\left(\frac{4^{\frac{n-1}{2}}}{\sqrt{n-1}}\right)\left(\frac{4^{\frac{n+1}{2}}}{\sqrt{n+1}}\right)
\leq \frac{2}{\sqrt{n-1}}
\leq\frac{2\sqrt{2}}{\sqrt{n}}
\end{equation}

where in the last line we used the fact that $n\geq2$ implies $n-1\geq \frac{n}{2}$. This completes the proof.


\subsection{Proof of Theorem 1}

The following lemma was used implicitly in reference \cite{MMC:groundstate}; here we quote a version of this lemma from \cite{CGW:BH} (proven in Section E.2 of that paper).

 If $M$ is positive semidefinite, write $\gamma(M)$ for its smallest nonzero eigenvalue and let $H|_S$ be the restriction of an operator $H$ to a subspace $S$.

\begin{lemma}[Nullspace Projection Lemma \cite{MMC:groundstate}; this version quoted from \cite{CGW:BH}]
Let $H_A,H_B\geq 0$. Suppose the nullspace $S_A$ of $H_A$ is nonempty and
\[
\gamma(H_B|_{S_A})\geq c>0 \quad \text{and} \quad \gamma(H_A)\geq d>0.
\]
Then 
\[
\gamma(H_A+H_B)\geq \frac{cd}{c+d+\|H_B\|}.
\]
\label{lem:npl}
\end{lemma}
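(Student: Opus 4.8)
The plan is to reduce the asserted bound to a two-variable optimization. Write $\Pi$ for the orthogonal projector onto $S_A=\ker H_A$; since $H_A,H_B\geq 0$ the nullspace of $H_A+H_B$ is exactly $\mathcal N:=S_A\cap\ker H_B$, and for any positive semidefinite $M$ one has $\gamma(M)=\min\{\langle\psi|M|\psi\rangle:\|\psi\|=1,\ \psi\perp\ker M\}$. (If $\mathcal N$ is the whole space then $H_A+H_B=0$ and there is nothing to prove.) So it suffices to lower-bound $\langle\psi|(H_A+H_B)|\psi\rangle$ over unit vectors $\psi$ with $\psi\perp\mathcal N$.

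Fix such a $\psi$ and split it as $\psi=a+b$ with $a=\Pi\psi\in S_A$, $b=(I-\Pi)\psi\in S_A^{\perp}$; set $\alpha=\|a\|$, $\beta=\|b\|$, so $\alpha^2+\beta^2=1$. Since $H_A a=0$ and $H_A$ is self-adjoint, $\langle\psi|H_A|\psi\rangle=\langle b|H_A|b\rangle$, and since $b\in(\ker H_A)^{\perp}$ this is at least $\gamma(H_A)\beta^2\geq d\beta^2$. For the $H_B$ term I would factor $H_B=L^{\dagger}L$ (say $L=H_B^{1/2}$). The key observation is that $\Pi H_B\Pi$, regarded as an operator on $S_A$, equals $H_B|_{S_A}$, whose nullspace within $S_A$ is $S_A\cap\ker H_B=\mathcal N$ and whose nonzero eigenvalues are $\geq c$; hence $\Pi H_B\Pi\geq c(\Pi-\Pi_{\mathcal N})$, writing $\Pi_{\mathcal N}$ for the projector onto $\mathcal N$. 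Because $\mathcal N\subseteq S_A$ and $\psi\perp\mathcal N$ we have $\Pi_{\mathcal N}a=\Pi_{\mathcal N}\psi=0$, so $\|La\|^2=\langle a|H_B|a\rangle=\langle a|\Pi H_B\Pi|a\rangle\geq c\alpha^2$, while $\|Lb\|^2=\langle b|H_B|b\rangle\leq\|H_B\|\beta^2$. The reverse triangle inequality then gives $\langle\psi|H_B|\psi\rangle=\|La+Lb\|^2\geq(\|La\|-\|Lb\|)^2$, and combining with the two bounds just obtained this is at least $\big(\max\{0,\ \sqrt{c}\,\alpha-\sqrt{\|H_B\|}\,\beta\}\big)^2$.

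Adding the two estimates, $\langle\psi|(H_A+H_B)|\psi\rangle\geq d\beta^2+\big(\max\{0,\ \sqrt{c}\,\alpha-\sqrt{\|H_B\|}\,\beta\}\big)^2$, and it remains to minimize the right-hand side over $\alpha^2+\beta^2=1$, $\alpha,\beta\geq 0$. If $\sqrt{\|H_B\|}\,\beta\geq\sqrt{c}\,\alpha$ then $\beta^2\geq c/(c+\|H_B\|)$, so the bound is $\geq cd/(c+\|H_B\|)\geq cd/(c+d+\|H_B\|)$. Otherwise the right-hand side is the quadratic form of the matrix $\begin{pmatrix} c & -\sqrt{c\|H_B\|}\\ -\sqrt{c\|H_B\|} & d+\|H_B\|\end{pmatrix}$ evaluated at the unit vector $(\alpha,\beta)$, hence is at least its smaller eigenvalue, and a short computation (rationalizing the surd in the eigenvalue formula) shows this equals $\frac{2cd}{(c+d+\|H_B\|)+\sqrt{(c+d+\|H_B\|)^2-4cd}}\geq\frac{cd}{c+d+\|H_B\|}$, the discriminant being nonnegative since $(c+d+\|H_B\|)^2-4cd\geq(c-d)^2$. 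This yields $\gamma(H_A+H_B)\geq cd/(c+d+\|H_B\|)$.

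I expect the one genuine subtlety to be the $H_B$ cross term: one \emph{cannot} simply bound $\langle\psi|H_B|\psi\rangle$ by $\langle a|H_B|a\rangle+\langle b|H_B|b\rangle$, since the cross term is indefinite and uncontrolled; passing through $H_B=L^{\dagger}L$ and the reverse triangle inequality is what makes it work, at the cost of the harmless case split above and the need to notice $\Pi_{\mathcal N}a=0$. The identification of $\ker(H_A+H_B)$, the variational formula for $\gamma$, the estimate $\langle b|H_A|b\rangle\geq d\beta^2$, and the final $2\times 2$ eigenvalue computation are all routine.
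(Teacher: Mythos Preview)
The paper does not actually prove this lemma; it is merely quoted from reference~\cite{CGW:BH}, with the remark that a proof appears in Section~E.2 of that paper. So there is no in-paper proof to compare against.

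Your argument is correct and self-contained. The decomposition $\psi=a+b$ along $S_A\oplus S_A^{\perp}$, the bound $\langle b|H_A|b\rangle\ge d\beta^2$, the treatment of the $H_B$ cross term via $H_B=L^{\dagger}L$ and the reverse triangle inequality (correctly using $\Pi_{\mathcal N}a=\Pi_{\mathcal N}\psi=0$ since $\mathcal N\subseteq S_A$ and $\psi\perp\mathcal N$), and the final reduction to the smallest eigenvalue of the $2\times 2$ matrix all go through as written. The case split is clean: in the first case $\beta^2\ge c/(c+\|H_B\|)$ gives the bound directly; in the second case the eigenvalue identity $\lambda_-=2cd/\big((c+d+\|H_B\|)+\sqrt{(c+d+\|H_B\|)^2-4cd}\,\big)$ together with $\sqrt{(c+d+\|H_B\|)^2-4cd}\le c+d+\|H_B\|$ yields the claim, and the discriminant is nonnegative since $(c+d+\|H_B\|)^2-4cd\ge(c+d)^2-4cd=(c-d)^2$.
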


We'll also use the following Theorem, proven in reference \cite{KM:XXZchain}.
\begin{theorem}[\cite{KM:XXZchain}]
The smallest nonzero eigenvalue of the $2n$-qubit XXZ chain with kink boundary conditions (\ref{eq:xxz2}) is
\[
\gamma \left(H_{\rm{XXZ}}(\lambda)\right)=1-\lambda \cos(\frac{\pi}{2n}).
\]
\label{thm:xxzthm}
\end{theorem}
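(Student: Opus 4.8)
\emph{Proof proposal.} Write $q=q(\lambda)\in[0,1]$ and recall from (\ref{eq:xxz2}) that $H_{\rm XXZ}(\lambda)=\sum_{w=1}^{2n-1}\ket{\Psi_{q}}\bra{\Psi_{q}}_{w,w+1}\ge 0$ is frustration-free; as shown in the Supplementary Material the gauge transformation $Q=T_{q}\otimes T_{q}^{2}\otimes\cdots\otimes T_{q}^{2n}$ carries its nullspace onto the fully symmetric subspace of $(\CC^{2})^{\otimes 2n}$, and since $Q$ is diagonal in the computational basis (hence Hamming-weight preserving) the ground energy is $0$ and the groundspace is $(2n+1)$-dimensional, with exactly one state in each Hamming-weight sector $m=0,\ldots,2n$. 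The plan is to (a) reduce $\gamma(H_{\rm XXZ}(\lambda))$ to the one-magnon sector $m=1$ using the quantum-group symmetry, and then (b) diagonalize that sector exactly; the latter immediately yields the upper bound $\gamma\le 1-\lambda\cos(\tfrac{\pi}{2n})$, and the two parts together give the matching lower bound.

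\emph{Step 1: quantum-group reduction.} The total magnetization $S^{z}=\tfrac12\sum_{w}Z_{w}$ together with the $q$-deformed ladder operators generate an action of ${\rm SU}_{q}(2)$ commuting with every bond term, hence with $H_{\rm XXZ}(\lambda)$; for $0<q\le 1$ this action decomposes $(\CC^{2})^{\otimes 2n}$ into irreducibles $V_{S}$, $S=n,n-1,\ldots$, with the same multiplicities $m_{S}=\binom{2n}{n-S}-\binom{2n}{n-S-1}$ as undeformed $SU(2)$. By Schur's lemma $H_{\rm XXZ}(\lambda)$ acts on the $S$-isotypic component as $A_{S}\otimes I_{V_{S}}$ for an $m_{S}\times m_{S}$ matrix $A_{S}$, so $\spect(H_{\rm XXZ}(\lambda))=\bigcup_{S}\spect(A_{S})$; the groundspace is the unique copy $V_{n}$ ($m_{n}=1$), so $A_{n}=0$ and $0\notin\spect(A_{S})$ for $S\le n-1$, whence $\gamma(H_{\rm XXZ}(\lambda))=\min_{S\le n-1}\min\spect(A_{S})$. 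The weight-$(n-1)$ vectors lying outside $V_{n}$ are precisely the highest-weight vectors of the $m_{n-1}=2n-1$ copies of $V_{n-1}$, and these span, together with one line of $V_{n}$, the $m=1$ sector; hence $\spect(H_{\rm XXZ}(\lambda)|_{m=1})=\{0\}\cup\spect(A_{n-1})$, so $\spect(A_{n-1})$ is exactly the set of nonzero one-magnon eigenvalues computed in Step 2. Every remaining irrep $V_{S}$ with $S\le n-2$ first appears in a sector $m\ge 2$, and the crux of the whole argument is the inequality $\min\spect(A_{S})\ge 1-\lambda\cos(\tfrac{\pi}{2n})$ for all $S\le n-2$ --- i.e.\ excitations built from two or more magnons above the ferromagnetic kink cost no less energy than a single spin wave. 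Granting this together with the value $\min\spect(A_{n-1})=1-\lambda\cos(\tfrac{\pi}{2n})$ from Step 2, we conclude $\gamma(H_{\rm XXZ}(\lambda))=1-\lambda\cos(\tfrac{\pi}{2n})$.

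\emph{Step 2: the one-magnon sector.} Let $\ket{e_{j}}$ be the weight-$1$ state with a single $\ket{1}$ at site $j$. A direct computation from (\ref{eq:xxz2}) shows $H_{\rm XXZ}(\lambda)$ restricts on $\mathrm{span}\{\ket{e_{j}}\}_{j=1}^{2n}$ to the real tridiagonal matrix $M$ with off-diagonal entries $-\tfrac{\lambda}{2}$, interior diagonal entries $1$, and corner diagonal entries $\tfrac12(1+\sqrt{1-\lambda^{2}})$ and $\tfrac12(1-\sqrt{1-\lambda^{2}})$. Using $\sqrt{1-\lambda^{2}}=\tfrac{\lambda}{2}(q^{-1}-q)$, the eigenproblem $Mc=\mu c$ becomes the bulk recursion $c_{j-1}+c_{j+1}=\tfrac{2}{\lambda}(1-\mu)c_{j}$ together with the twisted boundary conditions $c_{0}=q\,c_{1}$ and $c_{2n+1}=q^{-1}c_{2n}$ on the extended sequence $(c_{0},\ldots,c_{2n+1})$. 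The plane-wave ansatz $c_{j}=\alpha e^{i\theta j}+\beta e^{-i\theta j}$ solves the bulk recursion with $\mu=1-\lambda\cos\theta$, and the two boundary conditions collapse to the single quantization relation $e^{-4in\theta}=1$, giving $\theta=\tfrac{\pi k}{2n}$ for $k=1,\ldots,2n-1$; the one remaining mode, $\cos\theta=\tfrac12(q+q^{-1})=\tfrac1\lambda$, is $c_{j}=q^{-j}$ with $\mu=0$, the image under $Q$ of the symmetric one-magnon state. Hence $\spect(M)=\{0\}\cup\{\,1-\lambda\cos(\tfrac{\pi k}{2n}):k=1,\ldots,2n-1\,\}$, and since $\cos$ is decreasing on $(0,\pi)$ the smallest nonzero element is $1-\lambda\cos(\tfrac{\pi}{2n})$. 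In particular this number is an eigenvalue of $H_{\rm XXZ}(\lambda)$, which gives the upper bound $\gamma\le 1-\lambda\cos(\tfrac{\pi}{2n})$.

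\emph{Where the work is.} Everything above except the displayed inequality in Step 1 is routine linear algebra and representation theory. That inequality --- the absence of any bound state or scattering state of the ferromagnetic XXZ chain below the single-spin-wave energy $1-\lambda\cos(\tfrac{\pi}{2n})$, \emph{despite} the fact that genuine magnon bound states exist in this sign sector --- is the substance of the Koma--Nachtergaele computation, and I expect it to be the main obstacle. I would establish it either by Nachtergaele's martingale/recursion method (controlling $H_{\rm XXZ}$ on the interval $[1,L]$ by its restriction to $[1,L-1]$ together with the last bond projector, inducting on $L$), or by an explicit coordinate-Bethe-ansatz analysis of the two-magnon problem propagated to all $m$ via the ${\rm SU}_{q}(2)$ structure.
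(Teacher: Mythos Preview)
The paper does not prove this theorem; it quotes the result from Koma and Nachtergaele \cite{KM:XXZchain} and uses it as a black box in the proof of Theorem~1. There is therefore no in-paper argument to compare your proposal against.

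That said, your outline is structurally sound and candid about its own incompleteness. Step~2 is correct in every detail: the tridiagonal one-magnon matrix with corner entries $\tfrac12(1\pm\sqrt{1-\lambda^2})$, the rewriting of the boundary equations as $c_0=qc_1$ and $c_{2n+1}=q^{-1}c_{2n}$, the resulting quantization $\theta=\pi k/(2n)$, and the identification of the exponential zero mode $c_j=q^{-j}$ all check out. This already gives the unconditional \emph{upper} bound $\gamma\le 1-\lambda\cos(\tfrac{\pi}{2n})$. The ${\rm SU}_q(2)$ decomposition in Step~1 is also the right organizing principle and is indeed what Koma and Nachtergaele exploit.

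The genuine gap is exactly the one you name: that no isotypic block $A_S$ with $S\le n-2$ has an eigenvalue below $1-\lambda\cos(\tfrac{\pi}{2n})$. This is the substance of \cite{KM:XXZchain} and is not short. Of your two suggested routes, the martingale method would in general yield only \emph{some} positive gap rather than the sharp constant, so it would not by itself recover the equality claimed in the theorem; the Bethe-ansatz/highest-weight route is closer in spirit to what Koma and Nachtergaele actually do, but carrying it through requires controlling the two-magnon bound-state branch (which in the ferromagnetic sign convention lies below the two-magnon scattering continuum) and then propagating the bound to all $S$ via the quantum-group ladder. So your proposal is a correct roadmap with the hard step correctly located, but it is not a self-contained proof --- which is consistent with the paper's own decision simply to cite the result.
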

\begin{figure}[htb]
\centering
\begin{tikzpicture}[scale=0.75]
\tikzset{     position label/.style={        below = 3pt,        text height = 1.5ex,        text depth = 1ex     },    brace/.style={      decoration={brace, mirror},      decorate    } }
\begin{scope}[rotate=45]
\draw (0,0)--(0,4)--(4,4)--(4,0)--(0,0);
\draw[color=gray] (0,1)--(4,1);
\draw[color=gray] (0,2)--(4,2);
\draw[color=gray] (0,3)--(4,3);
\draw[color=gray] (1,0)--(1,4);
\draw[color=gray] (2,0)--(2,4);
\draw[color=gray] (3,0)--(3,4);
\draw[thick,color=red] (0,0)--(0,2)--(2,2)--(3,2)--(3,3)--(4,3)--(4,4);
\draw[thick, dotted] (-0.5,-0.5)--(4.5,4.5);
\node at (4.5,4) {L};
\node at (0.5,2.2) {$e$};
\fill (0,0) circle [radius=4pt];
\fill (2,2) circle [radius=4pt];
\end{scope}

\path[draw=black,solid,line width=1mm,fill=black, preaction={-triangle 90,thick,draw,shorten >=-1mm} ] (3.5,2.8)--(6.5,2.8);
\begin{scope}[xshift=10cm]
\begin{scope}[rotate=45]
\draw (0,0)--(0,4)--(4,4)--(4,0)--(0,0);
\draw[color=gray] (0,1)--(4,1);
\draw[color=gray] (0,2)--(4,2);
\draw[color=gray] (0,3)--(4,3);
\draw[color=gray] (1,0)--(1,4);
\draw[color=gray] (2,0)--(2,4);
\draw[color=gray] (3,0)--(3,4);
\draw[thick,color=red] (0,0)--(2,0)--(2,2)--(3,2)--(3,3)--(4,3)--(4,4);
\draw[thick,dotted] (-0.5,-0.5)--(4.5,4.5);
\node at (4.5,4) {L};
\node at (2.3,0.5) {$e'$};
\fill (0,0) circle [radius=4pt];
\fill (2,2) circle [radius=4pt];
\end{scope}
\end{scope}
\end{tikzpicture}
\caption{The bijection used in Lemma \ref{lem:pr} maps each string passing through an edge $e$ to a string passing through the mirror image edge $e'$. In this example the portion of the string which lies between the two vertices indicated by black circles is reflected about the vertical line $L$.}
\label{fig:bijection}
\end{figure}
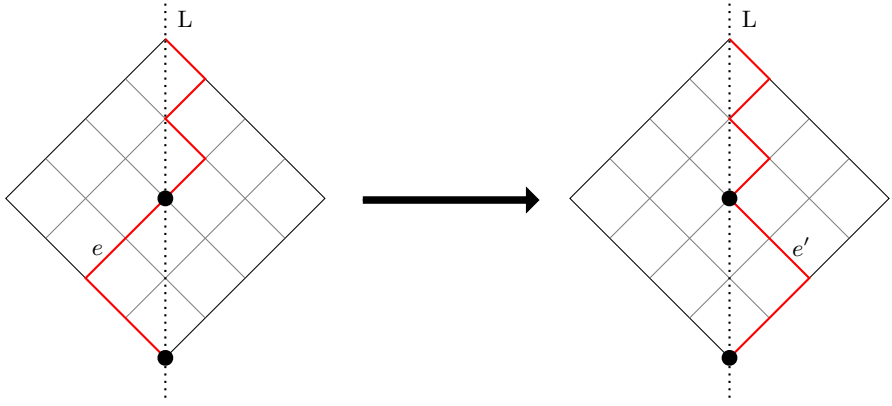

\noindent \textit{Proof of Theorem 1}\\

We first recall some facts established in the main text of the paper. We work in the Hilbert space with exactly one particle per horizontal line $w\in \{1,\ldots,2n\}$. Restricted to this space,  $H_{\rm{string}}$ has ground energy $0$ and smallest nonzero eigenvalue $\gamma (H_{\rm{string}})=2$. Furthermore, $[H_{\rm{string}},H(\lambda)]=[H_{\rm{string}},H_{\rm circuit} (\lambda)]=0$, and the unique groundstate of $H(\lambda)$ has energy $\sqrt{1-\lambda^2}$ and lies in the nullspace $S_{\rm string}$ of $H_{\rm{string}}$.

Now let $\bar{H}(\lambda)=H(\lambda)-\sqrt{1-\lambda^2}I\geq 0$, and write
\[
\bar{H}(\lambda)=H_A+H_B
\]
as a sum of two positive semidefinite terms $H_A=H_{\rm string}+(H_{\rm circuit}(\lambda)-\sqrt{1-\lambda^2}I)$ and $H_B=H_{\rm input}$. We use the Nullspace Projection Lemma to bound the smallest nonzero eigenvalue of $\bar{H}(\lambda)$. To do so, we'll need bounds on $\gamma(H_A)$, $\gamma(H_B|_{S_{A}})$ and $\|H_B\|$, where $S_{A}$ is the nullspace of $H_A$. 

First consider $\gamma (H_A)$. Since $[H_{\rm{string}},H_{\rm circuit} (\lambda)]=0$, $H_{\rm circuit} (\lambda)\geq 0$ and $\gamma (H_{\rm{string}})=2$ we see that any eigenvector of $H_{A}$ which is orthogonal to $S_{\rm string}$ has eigenvalue at least $2-\sqrt{1-\lambda^2}$. So $\gamma (H_A)$ is lower bounded by either $2-\sqrt{1-\lambda^2}$ or 
\[
\gamma \left((H_{\rm circuit}(\lambda)-\sqrt{1-\lambda^2}I)|_{S_{\rm string}}\right)
\]
whichever is smaller. By equation (\ref{eq:xxz}) and Theorem \ref{thm:xxzthm} we get 
\[
\gamma \left((H_{\rm circuit}(\lambda)-\sqrt{1-\lambda^2})|_{S_{\rm string}}\right)=\gamma \left(2H_{\rm XXZ}(\lambda)|_{W_n}\right)\geq 2\gamma \left(H_{\rm XXZ}(\lambda)\right)=2-2\lambda \cos(\frac{\pi}{2n})
\]
where $W_n=\text{span} \{|z\rangle : z\in \{0,1\}^{2n},\text{wt}(z)=n\}$ is the Hamming weight $n$ subspace. To see why the third inequality holds, note that $H_{\rm XXZ}(\lambda)$ conserves Hamming weight so its smallest nonzero eigenvalue in the sector with Hamming weight $n$ is at least the smallest nonzero eigenvalue in the full Hilbert space. Putting all of this together we get
\begin{equation}
\gamma (H_A)\geq \min\{2-2\lambda \cos(\frac{\pi}{2n}),2-\sqrt{1-\lambda^2}\}\geq 1-\lambda \cos(\frac{\pi}{2n}).
\label{eq:gammaA}
\end{equation}

Next we bound $\gamma(H_B|_{S_{A}})$. The nullspace of $H_A$ is spanned by the states $|\Phi_\lambda (x)\rangle$ with $x\in \{0,1\}^{2n}$ which are defined in equation (\ref{eq:gs_beforeinput}) for $0<\lambda\leq 1$. For $\lambda=0$ we set $|\Phi_0 (x)\rangle=|x,z_{\rm init}\rangle_V$.  In this basis $H_B|_{S_{A}}$ is diagonal, with matrix elements
\begin{equation}
 \frac{\langle \Phi_\lambda (x)|H_B| \Phi_\lambda (y)\rangle}{\| |\Phi_\lambda (x)\rangle \| \||\Phi_\lambda (y)\rangle\|}=\delta_{x,y}\sum_{w=1}^{2n} p_w(\lambda)\delta_{x_w,1}
\label{eq:HBrestricted}
\end{equation}
where $p_w(\lambda)$ is the probability of finding particle $w$ on the left half of the grid in the state $|\Phi_\lambda(x)\rangle$ (note this probability does not depend on $x$). When $\lambda=1$ we have $p_w(1)=\frac{1}{2}$ since $|\Phi_1(x)\rangle$ is a uniform superposition over the states $|x,z\rangle_V$ where $z$ runs over all configurations of the string. Moreover, as can be seen from equation (\ref{eq:gs_beforeinput}), when $0<\lambda<1$ the probability of a string is proportional to
\[
q(\lambda)^{-2A(z)}
\]
where $A(z)$ is the area of the rotated grid which lies to the right of the string $z$. Since this is an increasing function of $A(z)$, one might intuitively expect that the probability for a given particle to be found on the left-hand side of the grid does not decrease below $\frac{1}{2}$ as $\lambda$ goes below $1$. Indeed, we prove that $p_w(\lambda)$ is lower bounded by $\frac{1}{2}$ for $0\leq\lambda\leq1$.

\begin{lemma}
For each $1\leq w\leq 2n$ and $0 \leq\lambda\leq 1$, $p_w (\lambda)\geq \frac{1}{2}$.
\label{lem:pr}
\end{lemma}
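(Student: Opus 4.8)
The plan is to fix $w$ and exhibit an explicit bijection between the string configurations in which particle $w$ lies on the left half of the grid and those in which it lies on the right half, under which the unnormalized weight $q(\lambda)^{-2A(z)}$ of a string (the one appearing in $|\Phi_\lambda(x)\rangle$, see~(\ref{eq:gs_beforeinput})) does not increase. I take $0<\lambda<1$, so that $0<q(\lambda)<1$; the endpoint cases are immediate, since for $\lambda=0$ we have $|\Phi_0(x)\rangle=|x,z_{\rm init}\rangle_V$ and particle $w$ sits on the far left with certainty, while for $\lambda=1$ the distribution over strings is uniform and $p_w(1)=\tfrac{1}{2}$.

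First I would set up the geometry. Let $L$ be the diagonal of the rotated $n\times n$ grid joining the two common endpoints of all strings; $L$ is an axis of reflection symmetry of the grid and divides it into a left and a right triangular half. No edge of the grid lies along $L$, so the edges on horizontal line $w$ --- the possible locations of particle $w$ --- split into pairs $\{e,e'\}$, where $e'$ is the image of $e$ under reflection about $L$ and exactly one of $e,e'$ lies in each half. This includes the two edges incident to the vertex of line $w$ that lies on $L$, when such a vertex exists, which are each other's mirror images.

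The heart of the argument is the bijection depicted in Figure~\ref{fig:bijection}. Fix a left edge $e$ on line $w$ with mirror image $e'$. Given a string $z$ through $e$, let $a$ be the last vertex of $L$ that $z$ visits before traversing $e$ and $b$ the first it visits afterwards; these exist because every string starts and ends at a corner lying on $L$, and between $a$ and $b$ the string stays in the closed left half. Let $\phi_e(z)$ be the string obtained from $z$ by reflecting the portion between $a$ and $b$ about $L$ and leaving the rest unchanged. Then $\phi_e(z)$ is again a legal string, it passes through $e'$, it has the same set of $L$-contacts as $z$, and $\phi_{e'}$ undoes it; hence $\phi_e$ is a bijection from the strings through $e$ onto the strings through $e'$. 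For the weights, work in the horizontal band $B$ spanned by the levels of $a$ and $b$: since both $B$ and the grid are symmetric about $L$, the two sides of $L$ contribute equal area to $B$, so writing $\alpha(z)\ge 0$ for the area enclosed between $L$ and the part of $z$ inside $B$ one finds $A(z)=C+\tfrac{1}{2}|B|+\alpha(z)$ and $A(\phi_e(z))=C+\tfrac{1}{2}|B|-\alpha(z)$, where $C$ is the common contribution from outside $B$, $|B|$ is the grid area in $B$, and $\alpha(\phi_e(z))=\alpha(z)$ by the reflection symmetry. Therefore $A(z)-A(\phi_e(z))=2\alpha(z)\ge 0$.

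To finish, note that $0<q(\lambda)\le 1$ turns $A(z)\ge A(\phi_e(z))$ into $q(\lambda)^{-2A(z)}\ge q(\lambda)^{-2A(\phi_e(z))}$; summing over all strings through $e$ and using that $\phi_e$ maps them bijectively onto the strings through $e'$ gives $\prob{e}\ge \prob{e'}$ for the normalized distribution, where $\prob{e}$ denotes the probability that particle $w$ occupies edge $e$. Summing this over the left--right mirror pairs on line $w$ yields $p_w(\lambda)\ge 1-p_w(\lambda)$, i.e.\ $p_w(\lambda)\ge\tfrac{1}{2}$. I expect the main obstacle to be the bookkeeping around the bijection rather than the area computation: one must check carefully that $\phi_e(z)$ is a connected string passing through the mirror edge $e'$ in all cases, including edges one of whose endpoints lies on $L$, and that the left/right classification of edges used here is exactly the one defining $p_w(\lambda)$ (equivalently, the one encoded in $H_{\rm input}$ by the condition $t\le n$); with that settled, the identity $A(z)-A(\phi_e(z))=2\alpha(z)\ge 0$ and the final summation are routine.
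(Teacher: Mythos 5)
Your proof is correct and follows essentially the same approach as the paper: both use the reflection bijection about the centre line $L$ (reflecting the string segment between the two $L$-contacts straddling the edge $e$, illustrated in Fig.~\ref{fig:bijection}) and the observation that this reflection cannot increase the area $A(z)$, and both then conclude by summing over mirror pairs of edges on line $w$. Your explicit band decomposition showing $A(z)-A(\phi_e(z))=2\alpha(z)\geq 0$ is a welcome elaboration of a step the paper merely asserts.
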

\begin{proof}
When $\lambda=0$ the string $z_{\rm init}$ has probability $1$ and the result clearly holds; we therefore consider $0<\lambda\leq1$. 

To prove the Lemma, it is sufficient to show that the total probability that the string passes through some edge $e=(t,w)$ on the left
hand side (i.e., with $t\leq n$) is greater than or equal to the total probability that the string passes through the mirror image edge $e^{\prime}=(2n+1-t,w)$ on the right-hand side, obtained by reflecting about the vertical line $L$ which passes through the centre of the grid.

 Let $\Delta^{c}$ be the set of all strings which pass through
a given edge $c$. To prove the result, we define a bijection which
maps each string $z\in \Delta^{e}$ to $z^{\prime}\in \Delta^{e^{\prime}}$
such that the probability of $z$ is at least that of $z^{\prime}$. The bijection we use is illustrated in Fig. \ref{fig:bijection}. For each $z\in \Delta^{e}$, consider the vertices on the string which are located on the vertical line $L$ (the first and last vertex of
the string are always on $L$, and in general there may be more).
In particular, consider the two vertices $v_{\text{1}},v_{2}\in L$
on the string immediately before and after the string visits the edge
$e$ (these vertices are indicated in black in the Figure). Let $z^{\prime}$ be the string obtained from $z$ by reflecting
the segment of $z$ between $v_{1},v_{2}$ about the line $L$. The
mapping $z\rightarrow z^{\prime}$ is clearly a bijection. Furthermore,
the string $z^{\prime}$ passes through the edge $e^{\prime}$ and
satisfies $A(z^{\prime})\leq A(z)$. Hence the probabilty of $z$ is greater than that of $z^{\prime}$, for all $0<\lambda\leq1$ .

\end{proof}

Using equation (\ref{eq:HBrestricted}) and Lemma \ref{lem:pr} we arrive at the lower bound 
\begin{equation}
\gamma \left(H_B|_{S_A}\right) = \min_{w} p_w(\lambda) \geq \frac{1}{2}.
\label{eq:gammaB}
\end{equation}. 

Finally since our Hilbert space contains $2n$ particles in total we have
\begin{equation}
\|H_B\|=\left\|\sum_{w=1}^{2n} \sum_{t\leq n}  n_{t,1}[w]\right\|\leq 2n.
\label{eq:Hbnorm}
\end{equation}

Now plugging the bounds (\ref{eq:gammaA}), (\ref{eq:gammaB}), and (\ref{eq:Hbnorm}) into the Nullspace Projection Lemma we get
\begin{align}
\gamma (\bar{H}(\lambda))& \geq \frac{\frac{1}{2}(1-\lambda \cos(\frac{\pi}{2n}))}{\frac{1}{2}+(1-\lambda \cos(\frac{\pi}{2n}))+2n}\nonumber\\
& \geq \frac{1}{4n+3}(1-\lambda \cos(\frac{\pi}{2n})).
\end{align}
\qed


\subsection{Torus geometry}
In \cite{BT:spacetime} the space-time construction was analyzed for a torus (with periodic boundary conditions in space and time) instead of a rotated $n \times n$ grid.   In this geometry, the string around the torus is not held fixed at boundary points but can freely move.  Here we discuss the Hamiltonian $H_{\rm prop}=\sum_{p}H_{\rm prop}^p$ in this geometry and we show how its action on the string degree of freedom is equivalent to a lattice model of a persistent current ring.

In the torus geometry, the state-space has basis vectors $\ket{x} \ket{\tau}\ket{z}$ where $x$ is the internal $2n$ qubit state, $\tau \in \mathbb{Z}_D$  is a boundary point and $z \in \{0,1\}^{2n}$  describes a  string which forms a closed loop. The Hamiltonian $H_{\rm prop}$ conserves $\text{wt}(z)$ and, if $n$ is sufficiently large then there exist closed loops with $\text{wt}(z)\neq n$ and nonzero winding number. It is only possible to have nonzero winding number when  $2n > D$. For computational purposes we are interested in the sector with $\text{wt}(z)=n$ and winding number $0$, which we now specialize to. It was shown in \cite{BT:spacetime} that the gates can be rotated away and the action of $H_{\rm prop}$ on the string degree of freedom is unitarily equivalent to 
\begin{equation}
H_{\rm XY}^{\partial}=-\frac{1}{2}\sum_{w=1}^{2n-1} (X_w X_{w+1}+Y_w Y_{w+1})-\sum_{\tau \in \mathbb{Z}_{D}} \left(\ket{\tau-1}\bra{\tau} \otimes \sigma_{w=1}^- \sigma_{w=2n}^++\ket{\tau} \bra{\tau-1} \otimes \sigma_{w=1}^+ \sigma_{w=2n}^-\right). \nonumber
\end{equation}
The Hamiltonian is block diagonal in a basis where the boundary variable $\tau$ is in a plane-wave state, i.e., 
\begin{equation}
\ket{\psi_k}=\frac{1}{\sqrt{D}}\sum_{\tau \in \mathbb{Z}_{D}}e^{2\pi i k \tau/D}\ket{\tau}, \text{  }k=0,...,D-1, \;\;H_{\rm XY}^{\partial}\ket{\psi_k}\ket{z}=\ket{\psi_k} H_{\rm XY}^{\partial}(k)\ket{z}, \nonumber
\end{equation}
where
\begin{equation*}
H_{\rm XY}^{\partial}(k)= -\frac{1}{2}\sum_{w=1}^{2n-1}(X_w X_{w+1}+Y_w Y_{w+1})- e^{2\pi ik/D}\sigma_1^-\sigma_{2n}^+- e^{-2\pi ik/D}\sigma_1^+\sigma_{2n}^-.
\end{equation*}
For even $n$, $H_{\rm XY}^{\partial}(k)$ is unitarily equivalent (via a Jordan-Wigner transformation to fermion operators $Z_1\otimes Z_2\ldots \otimes Z_{w-1}\otimes \sigma_w^- \rightarrow a_w$ and a phaseshift $a_w \rightarrow e^{-2 \pi i k w/(2nD)} a_w$) to a well-known lattice model of a `persistent current ring' 
\begin{equation}
H_{\phi(k)}=-\sum_{w=1}^{2n} (e^{-2 \pi i \phi(k)/2n} a_w^{\dagger} a_{w+1}+ e^{2 \pi i \phi(k)/2n} a_{w+1}^{\dagger} a_{w}) \nonumber
\end{equation}
with $k$-dependent magnetic flux-variable $\phi(k)=\frac{k}{D}$ going through the ring and periodic boundary conditions $a_{2n+1}=a_1$.  Note that if one considers using the torus geometry to simulate a quantum circuit, one cannot initialize the computation in a state where the boundary register is $\ket{\psi_k}$ (which is delocalized over the whole lattice). It might however be interesting to consider whether such a simulation is possible using Schr\"{o}dinger time evolution where the initial state of the boundary register is a wavepacket.


\subsection{Efficiency of the scheme for universal quantum computation with a time-independent Hamiltonian}

Here we extend Janzing's analysis of the scheme for universal computation with a time-independent Hamiltonian. Janzing proves two relevant Theorems \cite{janzing:pra}; however, neither provides a polynomial bound on the total resource requirements. The statement of his first Theorem (labeled Theorem 2 in his paper) includes a condition $n\gg K$, and the precise meaning of this condition is not clear. The second Theorem (Theorem 3) concerns finite system size but proves a limiting result for $T\rightarrow \infty$ and it is not immediately clear whether $T=\text{poly}(n)$ suffices. Our goal in this section is to prove that the scheme is efficient.

Recall that the setup is as follows. We choose the interaction region as a subgrid with side length $K=n/4$, positioned in the left corner of the rotated $n \times n$ grid, and we consider the time evolved state
\[
|\chi_t\rangle=e^{-iH_{\rm prop}t}\ket{0^{2n},  z_{\rm init}}_V
\]
where $z_{\rm init}=0^n1^n$, $H_{\rm prop}=\sum_p H_{\rm prop}^p$, and with time $t$ uniformly chosen in an interval $[0,T]$ where $T=\Theta(n^3)$. After the evolution one measures the location of each of the particles. If each of the particles is found to the right of the interaction region then their internal degrees of freedom give the output of the computation. The following result is a finite $T$ variant of Theorem 3 in \cite{janzing:pra} and provides an $\Omega(1)$ lower bound on the probability of finding the whole string outside the interaction region. This establishes that this scheme can be used to efficiently simulate a universal quantum computer. The proof modifies Janzing's Theorem 3 \cite{janzing:pra} following a strategy from reference \cite{expqwalk}.

\begin{theorem}
Let $K=\frac{n}{4}$ and let the interaction region be the $K\times K$ subgrid positioned in the left corner of the rotated $n\times n$ grid. There exists a positive constant $c$ such that the following holds.  Choose $t \in [0,T]$ uniformly at random, with $T=c n^{3}$. Then the probability of measuring each of the particles outside the interaction region in the state $|\chi_t\rangle$  is at least $\frac{1}{4}+\mathcal{O}(\frac{1}{\sqrt{n}})$.
\end{theorem}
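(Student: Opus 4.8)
The plan is to track only the string register. As derived around (\ref{eq:offdiagterms}), in the basis (\ref{eq:nicebasis}) the operator $H_{\rm prop}=\sum_p H_{\rm prop}^p$ acts as the identity on the internal qubits and as $-\tfrac12\sum_{w=1}^{2n-1}(X_wX_{w+1}+Y_wY_{w+1})$ on the string, which a Jordan--Wigner transformation maps to the free-fermion hopping Hamiltonian $-\sum_{w=1}^{2n-1}(a_w^\dagger a_{w+1}+a_{w+1}^\dagger a_w)$ on a path of $2n$ sites, with single-particle modes $\phi_\ell(w)=\sqrt{\tfrac{2}{2n+1}}\sin(\tfrac{\pi\ell w}{2n+1})$ and energies $\epsilon_\ell=-2\cos(\tfrac{\pi\ell}{2n+1})$, $\ell=1,\dots,2n$. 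The initial string $z_{\rm init}=0^n1^n$ is the Slater determinant with the $n$ fermions on the right-half sites $\{n+1,\dots,2n\}$, so $|\chi_t\rangle$ is Gaussian with one-particle correlation matrix $C(t)=e^{-iht}\,\Pi_R\,e^{iht}$ ($\Pi_R$ the projector onto $\{n+1,\dots,2n\}$, $h$ the single-particle Hamiltonian). The first substantive step is to rewrite the measurement outcome: because the string is a connected monotone lattice path, ``all particles lie to the right of the interaction region'' is equivalent to ``the Young diagram to the left of the string contains the $K\times K$ corner square'', i.e.\ its $K$-th part is $\ge K$, and through the particle-position (Maya-diagram) bijection this becomes the single occupation-number condition $N_L\ge K$, where $N_L=\sum_{w=1}^{n}a_w^\dagger a_w$. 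Thus $P(t):=\mathbb P_{|\chi_t\rangle}[N_L\ge K]$, and the goal is a lower bound on $\bar P:=\tfrac1T\int_0^T P(t)\,dt$.

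Next I reduce from the distribution of $N_L$ to its mean. Since $|\chi_t\rangle$ is Gaussian, $N_L$ is a sum of independent Bernoulli variables (the eigenvalues of $\Pi_L C(t)\Pi_L$), so $\mathrm{Var}_t(N_L)\le n/4$; by Chebyshev, $P(t)\ge 1-\tfrac{n/4}{(\mathbb E_t[N_L]-K)^2}$ whenever $\mathbb E_t[N_L]>K$. In particular $P(t)\ge 1-O(n^{-1/2})$ for every $t$ in the good set $G:=\{t\in[0,T]:\mathbb E_t[N_L]\ge K+n^{3/4}\}$, whence $\bar P\ge (1-O(n^{-1/2}))\,|G|/T$.

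It then remains to lower-bound $|G|/T$ via the time-averaged mean. One has $\mathbb E_t[N_L]=\mathrm{Tr}(\Pi_L\,e^{-iht}\Pi_R e^{iht})=\sum_{\ell,\ell'}e^{-i(\epsilon_\ell-\epsilon_{\ell'})t}L_{\ell\ell'}R_{\ell\ell'}$ with $L_{\ell\ell'}=\langle\phi_\ell|\Pi_L|\phi_{\ell'}\rangle$, $R=\Pi_R$ in the mode basis, and $R_{\ell\ell'}=\delta_{\ell\ell'}-L_{\ell\ell'}$. Using that the $\epsilon_\ell$ are pairwise distinct, the $\tfrac1T\int_0^T$ average of the $\ell\ne\ell'$ terms is $O\!\big(\tfrac1{T}\sum_{\ell\ne\ell'}\tfrac{L_{\ell\ell'}^2}{|\epsilon_\ell-\epsilon_{\ell'}|}\big)\le \tfrac{n}{T\,\gamma_{\min}}$, where $\gamma_{\min}=\min_{\ell\ne\ell'}|\epsilon_\ell-\epsilon_{\ell'}|$ and I used $\sum_{\ell\ne\ell'}L_{\ell\ell'}^2=\mathrm{Tr}(\Pi_L)-\sum_\ell L_{\ell\ell}^2=n/2$. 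A short $\sin^2$ computation (the reflection $w\mapsto 2n+1-w$ swaps the two halves) gives $L_{\ell\ell}=\langle\phi_\ell|\Pi_L|\phi_\ell\rangle=\tfrac12$ for every $\ell$, so the diagonal part of the time average is $\sum_\ell L_{\ell\ell}R_{\ell\ell}=\sum_\ell\tfrac14=n/2$. Since $\gamma_{\min}=\Theta(n^{-2})$ (the spacing is smallest, of order $n^{-2}$, near the band edges $\ell\approx 1,2n$), the error is $O(n^3/T)$, and choosing $T=c\,n^3$ with $c$ large makes it $O(1)$, hence $\tfrac1T\int_0^T\mathbb E_t[N_L]\,dt\ge \tfrac n2-o(n)$. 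As $\mathbb E_t[N_L]\in[0,n]$ and $K=n/4$, the elementary inequality $\tfrac n2-o(n)\le (K+n^{3/4})(1-|G|/T)+n\,|G|/T$ forces $|G|/T\ge \tfrac13-o(1)$. Combining with the previous step, $\bar P\ge(\tfrac13-o(1))(1-O(n^{-1/2}))=\tfrac13-O(n^{-1/2})$, which comfortably exceeds $\tfrac14$ for large $n$, giving the stated bound $\tfrac14+\mathcal O(n^{-1/2})$ with room to spare.

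The main obstacle is the time-averaging error estimate in the last step: one must bound $\tfrac1T\sum_{\ell\ne\ell'}\tfrac{L_{\ell\ell'}^2}{|\epsilon_\ell-\epsilon_{\ell'}|}$, and it is precisely the $\Theta(n^{-2})$-small gaps between the nearly degenerate band-edge modes that dictate how large $T$ must be --- the crude estimate $n/(T\gamma_{\min})$ is exactly what produces the exponent $3$ in $T=cn^3$ (a finer split of the mode sum into band-edge and bulk parts would relax this, but is unnecessary here). A secondary point that needs care is the combinatorial translation in the first step: one must verify through the particle-position bijection that ``the Young diagram contains the $K\times K$ corner'' is the threshold event $N_L\ge K$, connectivity of the string being what collapses it to a condition on a single occupation number rather than on the whole configuration.
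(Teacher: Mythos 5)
Your proof is correct and genuinely departs from the paper's argument in a few places, so here is a comparison. Both proofs reduce the problem to the time average of $\mathbb{E}_t[N_L]$ with $N_L$ the number of fermions on the left $n$ sites, and both derive the $T=\Theta(n^3)$ requirement from the $\Theta(n^{-2})$ minimum spacing of the single-particle spectrum near the band edges — so the quantitative core is shared. The differences are in how the expectation is converted into a probability bound and in how the combinatorics of the interaction region are treated. The paper only claims the one-sided implication $N(z)\geq n/4\Rightarrow$ string outside the region, and then applies Markov's inequality directly to the time-averaged distribution, invoking Janzing's Jordan--Wigner identity ${\rm Exp}_T[N]=\sum_{j\le n}\sum_{l>n}{\rm Prob}_T(l\to j)$ and his Appendix~D estimate for ${\rm Exp}_\infty[N]$. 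You instead establish (correctly — the Maya/boundary-path argument shows $\lambda_K\geq K\Leftrightarrow N_L\geq K$) that the success event is \emph{exactly} $\{N_L\geq K\}$, then exploit the fact that $|\chi_t\rangle$ is a fermionic Gaussian state, so $N_L$ at each time $t$ is a sum of $n$ independent Bernoullis with $\mathrm{Var}\le n/4$; this lets you use Chebyshev within each time slice rather than Markov on the averaged distribution, which forces the extra ``good set of times'' step. Your direct mode-sum computation of $\frac1T\int\mathbb{E}_t[N_L]\,dt$, using $L_{\ell\ell}=\tfrac12$ and $\sum_{\ell\ne\ell'}L_{\ell\ell'}^2=n/2$, is tighter than the paper's (which bounds $|{\rm Exp}_\infty[N]-{\rm Exp}_T[N]|$ only to $n/16$); indeed it shows ${\rm Exp}_\infty[N]=n/2$ exactly, which the paper quotes only as $n/2+\mathcal{O}(\sqrt n)$. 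The net effect is that you get the slightly better constant $\tfrac13$ in place of $\tfrac14$. One small imprecision: with your good-set threshold $K+n^{3/4}$, the loss from the $|G|/T$ estimate is $\mathcal{O}(n^{-1/4})$, not $\mathcal{O}(n^{-1/2})$, so the final bound should read $\tfrac13-\mathcal{O}(n^{-1/4})$; this has no effect on the conclusion since it still comfortably exceeds $\tfrac14$ for large $n$, and choosing a threshold of $K+\epsilon n$ for a small constant $\epsilon$ would recover the sharper error rate.
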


\begin{proof}
We use the fact that the action of $H_{\rm prop}$ in the subspace $S_{\rm string}$ is equivalent to the XY model 
\begin{equation}
_{V}\langle x',z'|H_{\rm{prop}}|x,z\rangle_{V}=\delta_{x',x} \langle z'|H_{\rm XY}|z\rangle
\label{eq:xy}
\end{equation}
\begin{equation}
H_{\rm XY}=-\frac{1}{2}\sum_{w=1}^{2n-1} \left (X_wX_{w+1}+Y_wY_{w+1}\right).
\label{eq:xy2}
\end{equation}

Using this equivalence we see that the string register of the state $|\chi_t\rangle$ is described by
\[
\ket{\psi_t}=e^{-iH_{\rm XY}t}\ket{z_{\rm init}}.
\]

$H_{\rm XY}$ conserves Hamming weight and this state lives in the sector with Hamming weight $n$. The approach in \cite{janzing:pra} is to relate the dynamics of $H_{\rm XY}$ within the Hamming weight $n$ sector to  properties of the dynamics within the Hamming weight $1$ sector.  To this end, we will need the eigenvalues and eigenvectors of $H_{\rm XY}$ in the Hamming weight $1$ sector. Within this sector $H_{\rm XY}$ is equal to minus the adjacency matrix of a path with $2n$ vertices, with eigenvalues and eigenvectors given by

\begin{equation}\label{eigenvalues}
\lambda_{r}=-2\cos\frac{\pi r}{2n+1} \qquad  \ket{e_r}=\sqrt{\frac{2}{2n+1}}\,\sum_{j=1}^{2n}\sin\Bigl(\frac{jr\pi}{2n+1}\Bigr)\ket{\widehat{j}}
\end{equation}
for $r=1,\ldots, 2n$, where $\widehat{j}=0^{j-1}10^{n-j-1}$ is the $j$th Hamming weight $1$ bit-string.

Define probability distributions 
\[
{\rm Prob}(z,t)=|\bra{z} \psi_t \rangle|^2 \quad \text{and} \quad {\rm Prob}_T(z)=\frac{1}{T}\int_0^T dt \;{\rm Prob}(z,t),
\]
and let $N=N(z)=\sum_{j=1}^n z_j$ be the random variable which counts the Hamming weight of the first $n$ bits of $z$. Let ${\rm Exp}_T[N]=\sum_z {\rm Prob}_T(z) N(z)$. At time $t=0$ the distribution $\rm{Prob}(z,0)$ is concentrated on the string $z_{\rm init}$ which satisfies $N(z_{\rm init})=0$. On the other hand any string $z$ satisfying $N(z) \geq n/4$ lies to the right of the interaction region. Equivalently, $M(z)=n-N(z)\leq 3n/4$ guarantees that $z$ lies to the right of the interaction region. Our goal is to prove that, in the distribution ${\rm Prob}_T$, the probability that $M$ is at least $3n/4$ is asympotically upper bounded by $\frac{3}{4}+\mathcal{O}(\frac{1}{\sqrt{n}})$. Markov's inequality relates the probability of this event with the expected value of $M$:
\begin{equation}
\label{eq:Markov}
{\rm Prob}_T\left(M \geq \frac{3n}{4}\right) \leq \frac{4}{3n}{\rm Exp}_T[M]
\end{equation}
(with a slight abuse of notation on the LHS). To complete the proof we will establish that we can choose $T=c n^3$ (where $c$ is a constant) such that
\begin{equation}
\left|{\rm Exp}_T[N]-\frac{n}{2}\right|\leq \frac{n}{16}+\mathcal{O}(\sqrt{n})
\label{eq:EN}
\end{equation}
and hence ${\rm Exp}_T[M]\leq \frac{n}{2}+\frac{n}{16}+\mathcal{O}(\sqrt{n})$. Plugging this bound into (\ref{eq:Markov}) gives the desired result.

To show (\ref{eq:EN}) we rely on some facts proven in Appendix D of \cite{janzing:pra}. The first is that 
\[
|{\rm Exp}_{\infty}[N]-\frac{n}{2}| =\mathcal{O}(\sqrt{n}).
\]
so it is sufficient for us to prove
\begin{equation}
|{\rm Exp}_{\infty}[N]-{\rm Exp}_{T}[N]| \leq \frac{n}{16}.
\label{eq:ENinf}
\end{equation}
We also use the following relationship between the Hamming weight $n$ sector and the Hamming weight 1 sector, which was established in reference \cite{janzing:pra} (see equation (34) of the arxiv version of that paper) using a Jordan-Wigner transformation
\begin{equation}
{\rm Exp}_T[N]=\sum_{j=1}^n \sum_{l=n+1}^{2n} {\rm Prob}_T(l \rightarrow j)
\label{eq:ET}
\end{equation}
where 
\[
{\rm Prob}_T(l \rightarrow j)=\frac{1}{T}\int_{t=0}^{T}\left|\langle \widehat{j} |e^{-iH_{\rm XY}t}|\widehat{l}\rangle\right|^2 dt
\]
 is a time-averaged transition probability in the Hamming weight 1 sector. Using (\ref{eq:ET}) and expanding each of the transition probabilities in the basis (\ref{eigenvalues}) we get
\begin{align}
|{\rm Exp}_{\infty}[N]-{\rm Exp}_T[N]|& \leq\sum_{l=n+1}^{2n}\sum_{j=1}^n\sum_{1\leq r\neq r'\leq 2n}\Bigl|\langle \widehat{j}|e_r\rangle\langle e_r|\widehat{l}\rangle\langle \widehat{l}|e_{r'}\rangle\langle e_{r'}|\widehat{j}\rangle\Bigr|\frac{|1-e^{-i(\lambda_r-\lambda_{r'})T}|}{T|\lambda_r-\lambda_{r'}|}\nonumber \\
&\leq \frac{2}{T\Delta}\sum_{l=n+1}^{2n}\sum_{j=1}^n \Bigl(\sum_{r=1}^{2n}\left|\langle \widehat{j}|e_r\rangle \langle e_r|\widehat{l}\rangle\right| \Bigr)^2,
\end{align}
where $\Delta=\min_{1\leq r\neq r'\leq 2n}|\lambda_r-\lambda_{r'}|$ is the minimal spectrum gap, and in the first line we used the fact that the terms with $r=r'$ in the sum cancel with ${\rm Exp}_{\infty}[N]$. Using the expression (\ref{eigenvalues}) for the eigenvalues and eigenvectors we have $|\langle \widehat{i}|e_r\rangle|\leq \sqrt{\frac{2}{2n+1}}$ for each $i$ and $\Delta=\Theta(\frac{1}{n^2})$ (by a standard trigonometric identity). Thus, choosing $T=c n^3$ for a sufficiently large constant $c$, we can ensure that (\ref{eq:ENinf}) holds, completing the proof.
\end{proof}

Note that we expect that the $\Theta(n^3)$ choice of evolution time given in the Theorem is not best possible. For example, the analysis given in the main text of the paper suggests that the string moves with constant speed and an evolution time which is linear in $n$ should suffice.


\subsection{Quantum walk on Young's lattice}
In this Section we review the exact solution for the quantum walk on Young's lattice starting from the empty partition \cite{S88}.

Recall that we write $\sigma \dashv k$ to indicate that $\sigma $ is a partition of $k$. Write $\kappa \lessdot\sigma$
to indicate that $\sigma$ covers $\kappa$ in Young's lattice, i.e.,
$\kappa$ and $\sigma$ are partitions of consecutive integers connected
by an edge, with $\kappa \dashv k$ and $\sigma\dashv\left(k+1\right)$ for some $k\geq0$. Let
\begin{align*}
D_{\sigma}^{\downarrow} & =\left|\left\{ \kappa:\,\sigma \gtrdot\kappa\right\} \right|\quad D_{\sigma}^{\uparrow}=\left|\left\{ \kappa:\,\sigma \lessdot\kappa\right\} \right|
\end{align*}
 be the ``down'' and ``up'' degree of a partition $\sigma$,
and let 
\[
B_{\sigma,\sigma'}=\left|\left\{ \kappa:\,\kappa\gtrdot\sigma\text{ and }\kappa\gtrdot\sigma' \right\} \right|\quad C_{\sigma,\sigma'}=\left|\left\{ \kappa:\,\sigma\gtrdot\kappa\text{ and }\sigma'\gtrdot\kappa\right\} \right|
\]
 be the number of elements covering and covered by a pair of partitions
$\sigma,\sigma'$.

Young's lattice has the following two properties

\begin{align}
D_{\sigma}^{\uparrow} & =D_{\sigma}^{\downarrow}+1\qquad B_{\sigma,\sigma'}=C_{\sigma,\sigma'}\label{eq:diffposet}
\end{align}
which allow us to exactly solve for the time evolution of the quantum
walk (these are two of the defining features of a differential poset
\cite{S88}; see also Chapter 5 of \cite{S01}).

Let $A^{\dagger}$ be defined by 
\[
A^{\dagger}|\sigma\rangle=\sum_{\kappa\gtrdot\sigma}|\kappa\rangle,
\]
and let $A$ be its Hermitian conjugate; then
\begin{align*}
AA^{\dagger}|\sigma \rangle & =\sum_{\substack{\kappa\neq\sigma}
}B_{\kappa,\sigma}|\kappa\rangle+D_{\sigma}^{\uparrow}|\sigma\rangle\qquad A^{\dagger}A|\sigma \rangle=\sum_{\substack{\kappa \neq\sigma}
}C_{\kappa,\sigma}|\kappa \rangle+D_{\sigma }^{\downarrow}|\sigma\rangle.
\end{align*}
 Now using equation (\ref{eq:diffposet}) we see that $[A,A^{\dagger}]|\sigma\rangle=|\sigma\rangle$
which shows that $A$ and $A^{\dagger}$ satisfy the canonical commutation relation $[A,A^{\dagger}]=1$. Because of this fact it is easy to exponentiate the adjacency matrix $H_\mathbb{Y}=A+A^\dagger$ of Young's lattice: 
\[
e^{-i(A+A^{\dagger})t}=e^{-\frac{t^{2}}{2}}e^{-iA^{\dagger}t}e^{-iAt}.
\]
Since the initial state $|\emptyset\rangle$ is annihilated by $A$ we get 
\begin{align*}
e^{-i(A+A^{\dagger})t}|\emptyset\rangle & =e^{-\frac{t^{2}}{2}}e^{-iA^{\dagger}t}|\emptyset\rangle\\
 & =e^{-\frac{t^{2}}{2}}\sum_{m=0}^{\infty}\frac{\left(-it\right)^{m}}{m!}A^{\dagger m}|\emptyset\rangle.
\end{align*}
Recall that partitions $\sigma \vdash m$ are in one-to-one correspondence with irreducible representations of the symmetric group $S_{m}$.
The dimension $d_{\sigma}$ of the irrep associated with a given Young diagram $\sigma$ is equal to the number of standard Young
Tableaux with shape $\sigma$ (a standard Young Tableau is a Young diagram where the $m$ boxes are filled with the numbers $\{1,\ldots,m\}$ in such a way that the numbers increase to the right along each row and downward
along each column). Using this fact it is not hard to see that $d_{\sigma}$ is also equal to the number of paths of length $m$ in Young's lattice which start at $\emptyset$ and end at $\sigma$, i.e., 
\[
\langle\sigma |A^{\dagger m}|\emptyset\rangle=d_{\sigma}.
\]
 Hence 
\begin{align*}
e^{-i(A+A^{\dagger})t}|\emptyset\rangle & =e^{-\frac{t^{2}}{2}}\sum_{m=0}^{\infty}\frac{\left(-it\right)^{m}}{\sqrt{m!}}|\phi_{m}\rangle
\end{align*}
 where, for each $m\geq0$,
\[
|\phi_{m}\rangle=\sum_{\sigma \vdash m}\frac{d_{\sigma}}{\sqrt{m!}}|\sigma\rangle.
\]
\subsection*{Limit shape for random Young diagrams}
Let $\rho_{m}(\sigma)=\frac{d_{\sigma}^{2}}{m!}$ be the Plancherel measure. It was proven in \cite{L77} that for large $m$, a random Young diagram drawn from $\rho_m$ and rescaled by $\frac{1}{\sqrt{m}}$ in the $x$- and $y$-axes approaches a limiting shape. Here we imagine the Young diagram is drawn so that its two straight edges extend along the axes and meet at the origin. As $m\rightarrow\infty$, the resulting picture approaches a fixed curve with probability $\rightarrow1$, given by (\cite{R14}, Theorem 1.26) $\{(x(\theta),y(\theta):\,\theta\in[-\frac{\pi}{2},\frac{\pi}{2}]\}$
with 
\begin{align*}
x(\theta) & =\left(\frac{2\theta}{\pi}+1\right)\sin(\theta)+\frac{2}{\pi}\cos(\theta)\\
y(\theta) & =\left(\frac{2\theta}{\pi}-1\right)\sin(\theta)+\frac{2}{\pi}\cos(\theta).
\end{align*}
 This limiting curve is shown in Figure \ref{fig:limitshape}.

\begin{figure}
\centering
\includegraphics[scale=0.8]{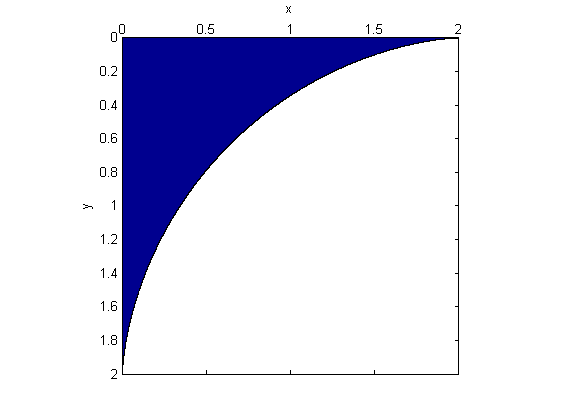}
\caption{A random Young diagram, drawn from the Plancherel measure and rescaled as discussed in the text, has the limiting shape shown in blue \cite{L77,R14}. \label{fig:limitshape}}
\end{figure}

\end{document}